\documentclass[11pt]{article}

\usepackage[margin=1.1in]{geometry}
\usepackage{amsmath,amssymb,amsthm}
\usepackage{booktabs}
\usepackage{microtype}
\usepackage{xcolor}
\usepackage[colorlinks=true,linkcolor=blue!50!black,citecolor=blue!50!black,urlcolor=blue!50!black]{hyperref}

\newtheorem{theorem}{Theorem}[section]
\newtheorem{lemma}[theorem]{Lemma}
\newtheorem{corollary}[theorem]{Corollary}
\newtheorem{proposition}[theorem]{Proposition}
\theoremstyle{definition}
\newtheorem{definition}[theorem]{Definition}
\theoremstyle{remark}
\newtheorem{remark}[theorem]{Remark}

\newcommand{\R}{\mathbb{R}}
\newcommand{\vol}{\operatorname{vol}}
\newcommand{\SX}{S_X}
\newcommand{\SY}{S_Y}
\newcommand{\SZ}{S_Z}
\newcommand{\K}{K}
\newcommand{\Stub}{\mathrm{Stub}}
\newcommand{\xhat}{\hat{x}}
\newcommand{\yhat}{\hat{y}}
\newcommand{\zhat}{\hat{z}}
\newcommand{\Vstar}{V^{*}}

\title{An upper bound for the three-dimensional moving sofa problem\\
in a snaky corridor}
\author{Yuzhe Tang\\[2pt]
\small Georgia Institute of Technology\\[1pt]
\small \texttt{ytang352@gatech.edu}}
\date{August 30, 2026 \\[2pt] \small (working note; constants machine-verified, see \S\ref{sec:discussion})}

\begin{document}
\maketitle

\begin{abstract}
Let $K_M\subset\R^3$ be the ``snaky'' corridor with unit-square cross-section:
an infinite entry arm parallel to the $x$-axis, a middle arm parallel to the
$y$-axis whose clear straight section has length $M$, and an infinite exit
arm parallel to the $z$-axis,
with the two turns lying in the $x$--$y$ and $y$--$z$ planes.  Let
$\Vstar(K_M)$ be the supremum of the volumes of connected bodies that can be
moved continuously (by rigid motions) from the entry arm to the exit arm while
remaining inside $K_M$.  We prove
$\Vstar(K_M)\le 7.8691$ for every $M\ge 3.53$ (Theorem~\ref{thm:A}); by a
sharper analysis of the same mechanism,
$\Vstar(K_M)\le 4.2862$ for every $M\ge 2.61$ (Theorem~\ref{thm:B}); and, by
replacing the final analytic caps with a certified branch-and-bound
computation over the compact family of entry tubes compatible with the forced
events, $\Vstar(K_M)\le 3.4083$ for every $M>2.2$ (Theorem~\ref{thm:C};
computer-assisted, in double precision with conservative slack).  In
particular all three bounds apply to the corridor with $M=4$ used in
Problem~6.64 of Georgiev, G\'omez-Serrano, Tao and Wagner
(arXiv:2511.02864), and uniformly to every ``turns far apart'' reading of that
problem.  To the best of our knowledge, this is the first upper bound for the
snaky-corridor 3D moving sofa problem (Problem~6.64 of arXiv:2511.02864);
previously not even finiteness of $\Vstar$ had been recorded.  This note
proves upper bounds only; combined with the certified lower bound
$1.8558404$ from the author's separate certification work, which runs
explicit paths through the problem's own evaluation pipeline, the state of
the problem becomes
\[
1.8558404 \;\le\; \Vstar(K_4) \;\le\; 3.4083 .
\]
Theorems~\ref{thm:A} and~\ref{thm:B} are elementary and self-contained: a
first-crossing (elbow) event forces a definite tilt of the body's own entry
tube, and Fubini-type chord estimates for tilted tubes cap the volume.
Theorem~\ref{thm:C} certifies, by the same forced-event skeleton, the largest
corridor volume that any admissible tube can carry.
\end{abstract}

\section{Introduction}\label{sec:intro}

The moving sofa problem of L.~Moser \cite{Moser66} asks for the largest area
of a planar region that can be moved around a right-angled corner in a hallway
of unit width.  Its history is a striking sequence of bounds:
Hammersley \cite{Hammersley68} gave the lower bound $\pi/2+2/\pi$ and the
upper bound $2\sqrt2$; Gerver \cite{Gerver92} constructed the region of area
$2.2195\ldots$ now believed extremal; Kallus and Romik \cite{KallusRomik18}
proved the computer-assisted upper bound $2.37$; and Baek
\cite{Baek24conditional,Baek24} recently established that Gerver's sofa is
optimal, so the planar constant is exactly $2.2195\ldots$
(see also \cite{He26} for recent numerical work on planar variants).

The three-dimensional problem considered here goes back to a 2016 question of
O'Rourke \cite{ORourkeMO}: what is the largest volume of a body that can pass
through a corridor of unit-square cross-section that makes two perpendicular
turns --- one in the $x$--$y$ plane, then one in the $y$--$z$ plane --- so that
the body must ``snake'' through?  The problem was posed for turns that are far
apart, and was popularized as Problem~6.64 of Georgiev, G\'omez-Serrano, Tao
and Wagner \cite{GGTW25}, where the corridor consists of
``two turns in the $x{-}y$ and $y{-}z$ planes that are far apart''
and the middle straight has length $M=4$ in the accompanying code.  All known
progress is from below: a construction of RavenclawPrefect \cite{ORourkeMO}
gives volume $\approx 1.76755$; the AlphaEvolve experiments reported in
\cite{GGTW25} give a certified $\ge 1.81$ (and $\approx 1.84$ estimated); and a
sequence of certified constructions obtained through the same evaluation
pipeline \cite{OurLB} raises the record to $1.8558404$
(Table~\ref{tab:ladder}).

On the upper-bound side the literature is empty.  The closest prior result is
due to Mushkarov and Nikolov \cite{MN26}, whose Theorem~4 determines the
maximal volume of a body passing through the spatial corridors
$S_{ij}=C_{ij}\times(0,c)$; that theorem is restricted in two essential ways:
the bodies are \emph{rectangular parallelepipeds (boxes) only}, and the
corridors are \emph{single-corner planar corridors extruded in height only} ---
the snaky corridor, with its two turns in perpendicular planes, is not treated
there.  For general connected bodies in the snaky corridor no upper bound of
any kind --- not even finiteness of the supremum --- appears to have been
published.  (The trivial bound ``volume $\le 4$'' sometimes quoted in
connection with \cite{GGTW25} concerns only the restricted search class of
that experiment, whose bodies are subsets of a $4\times1\times1$ block; it
says nothing about the problem itself.)

\subsection*{Results}

Write $\Vstar(K_M)$ for the supremum of volumes of connected bodies that
traverse the corridor $K_M$ in the weak sense of
Definition~\ref{def:traversal} (weak traversal imposes the fewest constraints,
so upper bounds proved under it cover all the usual readings of ``passes
through'').  Our three results are:

\begin{itemize}
\item \textbf{Theorem~\ref{thm:A}.} $\Vstar(K_M)\le u+
(1+\sqrt2)\frac{\sqrt{2+u^2}}{u-\sqrt2}$ for every $u\in(\sqrt2,M)$; taking
$u=3.5217$ gives $\Vstar(K_M)\le 7.8691$ for every $M\ge 3.53$.
\item \textbf{Theorem~\ref{thm:B}.} $\Vstar(K_M)\le\Psi(u)$ for every
$u\in(\sqrt2,M)$, where $\Psi$ is the explicit elementary function
\eqref{eq:Psi}; the optimal choice gives $\Vstar(K_M)\le 4.2862$ for every
$M\ge 2.61$.
\item \textbf{Theorem~\ref{thm:C}} (computer-assisted).
$\Vstar(K_M)\le 3.4083$ for every $M>2.2$: a certified branch-and-bound
computation bounds the exact supremum that Theorem~\ref{thm:B} estimates
analytically --- the largest corridor volume carried by any tube compatible
with the forced events (Section~\ref{sec:certC}).
\end{itemize}

The three results form a deliberately redundant ladder:
Theorem~\ref{thm:A} is the anchor with the shortest hand-checkable proof,
Theorem~\ref{thm:B} the best bound we prove by pure algebra, and
Theorem~\ref{thm:C} the computed certificate; each stands independently of
the later ones.  To the best of our knowledge, this is the first upper bound
for the snaky-corridor 3D moving sofa problem (Problem~6.64 of
arXiv:2511.02864 \cite{GGTW25}).  With the current certified record lower
bound \cite{OurLB} the problem is bracketed:
\begin{equation}\label{eq:bracket}
1.8558404 \;\le\; \Vstar(K_4) \;\le\; 3.4083 .
\end{equation}

\begin{table}[ht]
\centering
\caption{Certified lower bounds for $\Vstar(K_4)$ (weak traversal), all
obtained by running the \emph{unmodified} certification pipeline accompanying
\cite{GGTW25} (commit-pinned, tolerances $(0.005,\,0.05^\circ)$) on explicit
pose paths; artifacts, paths and reproduction notebooks at \cite{OurLB}.
Grid $h$ is the discretization fineness of the body.  See
Appendix~\ref{app:soundness} and Remark~\ref{rem:caveat} for the precise
rigor status of these numbers.}
\label{tab:ladder}
\small
\setlength{\tabcolsep}{4.5pt}
\begin{tabular}{lcc}
\toprule
path & certified at $h=0.002$ & certified at $h=0.001$ \\
\midrule
AlphaEvolve path of \cite{GGTW25} (paper claims $\ge 1.81$) & 1.817943 & --- \\
\texttt{w4\_staggered} (staggered-twist construction) & 1.827908 & 1.8491627 \\
\texttt{w5\_night1} (CMA-ES search) & 1.8309875 & 1.8518415 \\
\texttt{w5\_night2} (CMA-ES search) & 1.8339829 & --- \\
\texttt{w6\_night3} (search + certifier-aligned polish) & \textbf{1.8355162} & \textbf{1.8558404} \\
\bottomrule
\end{tabular}
\end{table}

\subsection*{Proof idea}

A body $B$ traversing the corridor starts inside the entry arm, so at every
later time $t$ it is contained in a rigid image $T(t)$ of the infinite tube
$\R\times[0,1]^2$ --- ``its own entry tube''
(Lemma~\ref{lem:tube}).  Monitoring the maximal $y$-coordinate of the moving
body produces, for every height $c=1+u$ below the second turn, a
\emph{first-crossing event}: a time at which the body simultaneously
(i) still fits in the union of the entry arm and the stub of the middle arm
below height $c$, and (ii) touches the plane $\{y=c\}$ inside the stub's unit
footprint (Lemma~\ref{lem:event}).  If at that moment the body still has a
point at height $\le 1$, connectedness plants a second marked point at height
exactly $1$, also with unit footprint (Lemma~\ref{lem:deep}).  Two marked
points of the tube $T$ separated by $u$ in $y$ but confined to unit boxes in
$x$ and $z$ force the tube's axis to tilt towards $\yhat$ by a definite,
quantitative amount (Lemmas~\ref{lem:tilt} and~\ref{lem:tiltsharp}) --- this is
the step where a \emph{parked} body is impossible: the body must have begun to
rotate.  Finally, a tilted tube can only carry a bounded volume through the
low part of the corridor, and only a bounded volume fits in the stub;
quantifying these two caps by a slab estimate (Lemma~\ref{lem:slab}) yields
Theorem~\ref{thm:A}, and quantifying them by sharper Fubini chord estimates
(Lemma~\ref{lem:chord}) yields Theorem~\ref{thm:B}.  Optimizing the free
height $u$ gives the stated constants.  Theorem~\ref{thm:C} keeps the same
forced-event skeleton but replaces the closed-form caps by a computation:
the events confine the body's tube to an explicitly parameterized compact
five-parameter family, and a certified-outer branch-and-bound bounds the
largest corridor volume any such tube can carry (Section~\ref{sec:certC}).
No step assumes the existence of an
extremal body, and no step uses any feature of the numerical search
(fixed endpoints, bounded body class) that produced the lower bounds.

All lemmas below were additionally stress-tested numerically against the
record traversal of \cite{OurLB} (a $512$-pose certified path); see
Remark~\ref{rem:witness}.

\section{The corridor, traversal, and the objective}\label{sec:setup}

Throughout, $\vol$ is Lebesgue measure on $\R^3$, invariant under rigid
motions; $\xhat,\yhat,\zhat$ is the standard basis, and $p\cdot v$ the
Euclidean inner product.

\begin{definition}[corridor]\label{def:corridor}
Fix the middle-straight parameter $M>0$ and set $y_t:=1+M$.  Define the
closed sets
\begin{align*}
\SX &:= (-\infty,0]\times[0,1]\times[0,1] && \text{(entry arm)},\\
\SY &:= [0,1]\times[0,\,y_t]\times[0,1] && \text{(middle arm)},\\
\SZ &:= [0,1]\times[y_t,\,y_t+1]\times[0,\infty) && \text{(exit arm)},
\end{align*}
and the \emph{corridor} $\K=\K_M:=\SX\cup\SY\cup\SZ$, a closed connected set.
The cube $[0,1]^3$ (the $y\le1$ end of $\SY$) is the \emph{first corner cube},
where the turn from the $-x$ direction to the $+y$ direction happens (a turn
in the $x$--$y$ plane); the cube $[0,1]\times[y_t,y_t+1]\times[0,1]$ (the
$z\le1$ end of $\SZ$) is the \emph{second corner cube}, where the turn from
$+y$ to $+z$ happens (a turn in the $y$--$z$ plane); between them lies the
clear middle straight $[0,1]\times[1,y_t]\times[0,1]$ of length $M$.  The
value in the code accompanying Problem~6.64 of \cite{GGTW25} is $M=4$.
For $c\in[0,y_t]$ write $\Stub(c):=[0,1]\times[0,c]\times[0,1]\subseteq\SY$.
\end{definition}

Note the elementary but frequently used facts
\begin{equation}\label{eq:Kfacts}
\K\subseteq\{y\ge0\},\qquad
\K\cap\{y\le c\}=\SX\cup\Stub(c)\quad\text{for }1\le c< y_t,
\end{equation}
the second because $\SX\subseteq\{y\le1\}$, $\SY\cap\{y\le c\}=\Stub(c)$, and
$\SZ\cap\{y\le c\}=\emptyset$ for $c<y_t$.  Also
$\SX\cup\Stub(c)\subseteq\{z\in[0,1]\}$ and
$\SX\cup\Stub(c)\subseteq\{x\le 1\}$.

\begin{definition}[traversal; the objective]\label{def:traversal}
A \emph{body} is a compact connected set $B\subset\R^3$ with $\vol(B)>0$.  A
\emph{motion} is a continuous map $g:[0,1]\to SE(3)$ (matrix topology),
written $g_t(p)=R_tp+v_t$ with $R_t\in SO(3)$; put $B_t:=g_t(B)$.  The pair
$(B,g)$ \emph{traverses $\K_M$ weakly} if
\[
\text{(i) } B_t\subseteq\K_M\ \ \forall t\in[0,1];\qquad
\text{(ii) } B_0\subseteq\SX;\qquad
\text{(iii) } B_1\subseteq\SZ ,
\]
and \emph{strongly} if moreover $B_1\subseteq\SZ\cap\{z\ge1\}$ (entirely past
the second corner cube, from where a pure $+z$ translation exits to
infinity).  The objective is
\[
\Vstar(\K_M):=\sup\{\vol(B):\ (B,g)\text{ traverses } \K_M \text{ weakly}\}.
\]
\end{definition}

\begin{remark}[weak traversal suffices; compactness is no loss]\label{rem:wlog}
(a) Strong traversal implies weak, so every upper bound proved for weak
traversal covers both readings, as well as the instantiation in the code of
\cite{GGTW25}, whose fixed start and end poses satisfy (ii) and the strong
form of (iii).
(b) Requiring $B$ compact loses no generality: if $B$ is merely bounded,
connected and measurable, its closure $\overline{B}$ is compact and connected,
$\vol(\overline B)\ge\vol(B)$, and
$g_t(\overline B)=\overline{g_t(B)}\subseteq\overline{\K}=\K$ (the corridor
and the arms $\SX,\SZ$ are closed), so $(\overline B,g)$ traverses whenever
$(B,g)$ does.  Hence the supremum over bounded connected sets equals the
supremum over bodies.
(c) No existence of a maximizer is assumed anywhere below; all bounds cap the
supremum directly.
(d) The experiments of \cite{GGTW25} restrict to bodies inside a
$4\times1\times1$ block and to one fixed pair of endpoint poses.  These are
restrictions of the search class, legitimate for lower bounds; none of them
may be (or is) used in the upper-bound proofs below.
\end{remark}

\section{Tubes, events, and the two marked points}\label{sec:events}

Throughout this section $(B,g)$ traverses $\K=\K_M$ weakly, and $R_t$ is the
rotation part of $g_t$.

\begin{definition}[tube]\label{def:tubedef}
A \emph{tube} is a rigid image of $\R\times[0,1]^2$: a set
\[
T=\{\rho+s\,w+\alpha f_1+\beta f_2:\ s\in\R,\ |\alpha|\le\tfrac12,\
|\beta|\le\tfrac12\},
\]
where $w$ (the \emph{axis direction}) and $f_1,f_2$ form an orthonormal
frame and $\rho\in\R^3$.  Its cross-section is a unit square of diameter
$\sqrt2$; a tube is closed and convex.  Points of $T$ decompose uniquely as
$\rho+s w+\xi$ with $\xi\perp w$, $\xi$ in the cross-section square
$Q:=\{\alpha f_1+\beta f_2:|\alpha|,|\beta|\le\frac12\}$, and
$|\xi|\le\sqrt2/2$.
\end{definition}

\begin{lemma}[tube confinement]\label{lem:tube}
For every $t\in[0,1]$,
\[
B_t\;\subseteq\;T(t):=g_t\,g_0^{-1}\bigl(\R\times[0,1]^2\bigr),
\]
a tube with axis direction $w(t):=R_tR_0^{-1}\xhat$.  Moreover
$B_t\subseteq g_tg_0^{-1}(\SX)$ (a half-tube) and, symmetrically,
$B_t\subseteq g_tg_1^{-1}(\SZ)$, so $B_t$ also lies in a tube with axis
$R_tR_1^{-1}\zhat$.
\end{lemma}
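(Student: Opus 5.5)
The plan is to transport the initial containment $B_0\subseteq\SX$ forward by the rigid motion. Since $B_t=g_t(B)$ and $B_0=g_0(B)$, we have $B_t=g_tg_0^{-1}(B_0)$. Condition (ii) of Definition~\ref{def:traversal} gives $B_0\subseteq\SX$, hence
\[
B_t\subseteq g_tg_0^{-1}(\SX)\subseteq g_tg_0^{-1}\bigl(\R\times[0,1]^2\bigr)=T(t),
\]
because $\SX=(-\infty,0]\times[0,1]^2\subseteq\R\times[0,1]^2$ and $g_tg_0^{-1}$ is a bijection. This already gives both the half-tube statement and the tube statement.

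It remains to check that $T(t)$ is a tube in the sense of Definition~\ref{def:tubedef}, with the stated axis. Write $h:=g_tg_0^{-1}$, so that $h(p)=Rp+v$ with $R=R_tR_0^{-1}\in SO(3)$, since the composition of rigid motions is rigid and the linear parts multiply. The standard tube is
\[
\R\times[0,1]^2=\{\rho_0+s\xhat+\alpha\yhat+\beta\zhat:\ s\in\R,\ |\alpha|,|\beta|\le\tfrac12\}
\]
with $\rho_0=(0,\tfrac12,\tfrac12)$. Applying $h$ yields
\[
\{\rho+s\,w+\alpha f_1+\beta f_2\}
\]
with $\rho=R\rho_0+v$, $w=R\xhat$, $f_1=R\yhat$ and $f_2=R\zhat$. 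This triple is orthonormal because $R$ is orthogonal. So $T(t)$ is a tube with axis $w(t)=R_tR_0^{-1}\xhat$.

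The symmetric claim follows in the same way from condition (iii), $B_1\subseteq\SZ$. We have $B_t=g_tg_1^{-1}(B_1)\subseteq g_tg_1^{-1}(\SZ)$. Moreover
\[
\SZ=[0,1]\times[y_t,y_t+1]\times[0,\infty)\subseteq[0,1]\times[y_t,y_t+1]\times\R,
\]
which is the translate of a tube with axis $\zhat$, centered at $(\tfrac12,y_t+\tfrac12,0)$ with cross-section frame $\xhat,\yhat$. Its image under $g_tg_1^{-1}$ is therefore a tube with axis $R_tR_1^{-1}\zhat$, by the same frame computation.

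There is no real obstacle here. The only points needing care are that the linear part of $g_tg_0^{-1}$ is $R_tR_0^{-1}$, and the bookkeeping that rewrites $[0,1]^2$ in the centered form $|\alpha|,|\beta|\le\tfrac12$ that Definition~\ref{def:tubedef} uses. Continuity of the motion is not needed for this lemma.
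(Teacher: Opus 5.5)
Your proposal is correct and matches the paper's proof: you transport $B_0\subseteq\SX\subseteq\R\times[0,1]^2$ by $g_tg_0^{-1}$, and $B_1\subseteq\SZ$ by $g_tg_1^{-1}$, exactly as the paper does. Your explicit frame computation ($w=R\xhat$, $f_1=R\yhat$, $f_2=R\zhat$, $\rho=R\rho_0+v$ with $R=R_tR_0^{-1}$) just spells out the paper's one-line remark that a rigid image of the standard tube is a tube with axis $R_tR_0^{-1}\xhat$; bijectivity of $g_tg_0^{-1}$ is not actually needed for the inclusion.
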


\begin{proof}
$B_0=g_0(B)\subseteq\SX\subseteq\R\times[0,1]^2$, hence
$B\subseteq g_0^{-1}(\R\times[0,1]^2)$ and
$B_t=g_t(B)\subseteq g_tg_0^{-1}(\R\times[0,1]^2)$, which is a rigid image of
the standard tube with axis $\xhat$, hence a tube with axis direction
$R_tR_0^{-1}\xhat$.  The refinements use
$B_0\subseteq \SX$ and $B_1\subseteq \SZ\subseteq[0,1]\times[y_t,y_t+1]\times\R$
verbatim.
\end{proof}

\begin{lemma}[continuity of the height profile]\label{lem:cont}
The functions $\mathcal M(t):=\max_{p\in B_t}p\cdot\yhat$ and
$m(t):=\min_{p\in B_t}p\cdot\yhat$ are continuous on $[0,1]$.
\end{lemma}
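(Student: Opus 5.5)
We will show that $\mathcal M$ and $m$ are in fact Lipschitz-type continuous with a modulus controlled by the motion $g$. Since $B$ is compact it lies in a ball: fix $D:=\max_{q\in B}|q|<\infty$. For $t,s\in[0,1]$ and $q\in B$,
\[
\bigl|g_t(q)\cdot\yhat-g_s(q)\cdot\yhat\bigr|\le |g_t(q)-g_s(q)|\le \|R_t-R_s\|\,D+|v_t-v_s|=:\omega(t,s),
\]
where $\|\cdot\|$ is the operator norm. Because $g$ is continuous into $SE(3)$ with the matrix topology, $R_t$ and $v_t$ are continuous, so $\omega(t,s)\to0$ as $s\to t$. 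Note also that the maxima and minima defining $\mathcal M$ and $m$ are attained, since $B_t=g_t(B)$ is compact and $p\mapsto p\cdot\yhat$ is continuous.

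The key step is to parametrize both $B_t$ and $B_s$ by the same set $B$:
\[
\mathcal M(t)=\max_{q\in B} g_t(q)\cdot\yhat,\qquad \mathcal M(s)=\max_{q\in B}g_s(q)\cdot\yhat .
\]
We then use the elementary fact that if two functions $F,G$ on a common compact set satisfy $\sup|F-G|\le\varepsilon$, then $|\max F-\max G|\le\varepsilon$ and $|\min F-\min G|\le\varepsilon$. To see this, choose a maximizer $q^*$ of $F$; then $\max G\ge G(q^*)\ge F(q^*)-\varepsilon=\max F-\varepsilon$, and the reverse inequality follows by symmetry. The argument for minima is identical. Applying this with $F=g_t(\cdot)\cdot\yhat$, $G=g_s(\cdot)\cdot\yhat$ and $\varepsilon=\omega(t,s)$ gives $|\mathcal M(t)-\mathcal M(s)|\le\omega(t,s)$ and $|m(t)-m(s)|\le\omega(t,s)$, which proves continuity of both functions.

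There is no real obstacle here. The only point needing care is to avoid comparing $B_t$ and $B_s$ as sets via the Hausdorff distance, and instead to use the common parametrization by $B$ together with the uniform bound obtained from compactness ($D<\infty$). The same argument also shows that any support function $t\mapsto\max_{p\in B_t}p\cdot e$ is continuous, for any fixed direction $e$.
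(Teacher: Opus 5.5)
Your proposal is correct and is essentially the paper's proof: the same modulus $\omega(s,t)=\|R_t-R_s\|_{\mathrm{op}}\max_{q\in B}|q|+|v_t-v_s|$, the same parametrization of $B_t$ and $B_s$ by the common set $B$, and the same two-sided comparison of maxima and minima. The only difference is that you use $\omega(t,s)\to0$ as $s\to t$ directly, whereas the paper invokes uniform continuity of $g$ on $[0,1]$; either suffices for continuity.
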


\begin{proof}
$B$ is compact, so the maxima exist.  For $s,t\in[0,1]$ and any $p\in B$,
$|g_t(p)-g_s(p)|\le\|R_t-R_s\|_{\mathrm{op}}\max_{q\in B}|q|+|v_t-v_s|=:
\omega(s,t)$, so $\mathcal M(t)\le\mathcal M(s)+\omega(s,t)$ and symmetrically;
$g$ is continuous on the compact interval $[0,1]$, hence uniformly continuous,
so $\omega(s,t)\to0$ as $|s-t|\to0$.  The same bound handles $m$.
\end{proof}

\begin{lemma}[first-crossing event]\label{lem:event}
Let $c\in(1,y_t)$.  There is a first time $\tau_c\in(0,1)$ with
$\mathcal M(\tau_c)=c$, and at it
\[
B_{\tau_c}\subseteq \K\cap\{y\le c\}=\SX\cup\Stub(c),
\]
and there exists a touch point $q\in B_{\tau_c}$ with
\[
q\cdot\yhat=c,\qquad q\in[0,1]\times\{c\}\times[0,1].
\]
\end{lemma}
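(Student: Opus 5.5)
The plan is an intermediate-value argument applied to the continuous height profile $\mathcal M$ of Lemma~\ref{lem:cont}, followed by a containment check using \eqref{eq:Kfacts}.

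\textbf{Endpoint values.} At $t=0$ we have $B_0\subseteq\SX\subseteq\{y\le1\}$, so $\mathcal M(0)\le 1<c$. At $t=1$ we have $B_1\subseteq\SZ\subseteq\{y\ge y_t\}$, so $\mathcal M(1)\ge y_t>c$.

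\textbf{Existence of $\tau_c$.} The set $\{t:\mathcal M(t)=c\}$ is nonempty by the intermediate value theorem. It is closed because $\mathcal M$ is continuous, so it has a minimum $\tau_c$. The endpoint values exclude $0$ and $1$, so $\tau_c\in(0,1)$. By minimality and the IVT again, $\mathcal M(t)<c$ for all $t<\tau_c$; any earlier value $\ge c$ would give an earlier crossing, since $\mathcal M(0)<c$.

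\textbf{Containment.} At $\tau_c$ every point of $B_{\tau_c}$ has $y\le\mathcal M(\tau_c)=c$. Condition (i) of Definition~\ref{def:traversal} gives $B_{\tau_c}\subseteq\K$. Hence $B_{\tau_c}\subseteq\K\cap\{y\le c\}=\SX\cup\Stub(c)$ by \eqref{eq:Kfacts}, which applies because $1<c<y_t$.

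\textbf{Touch point.} Since $B$ is compact, $B_{\tau_c}$ is compact and the maximum $c$ of $p\cdot\yhat$ is attained at some $q\in B_{\tau_c}$. Then $q\in\SX\cup\Stub(c)$ with $q\cdot\yhat=c>1$. Points of $\SX$ have $y\le1$, so $q\in\Stub(c)=[0,1]\times[0,c]\times[0,1]$. Together with $q\cdot\yhat=c$ this gives $q\in[0,1]\times\{c\}\times[0,1]$.

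\textbf{Main obstacle.} There is no real difficulty here. The only delicate points are the strictness bookkeeping and the role of $c<y_t$. The inequality $c>1$ is what excludes $q$ from $\SX$. The inequality $c<y_t$ is what makes \eqref{eq:Kfacts} applicable (so $\SZ$ does not meet $\{y\le c\}$) and gives $\mathcal M(1)>c$. Minimality of $\tau_c$ is not needed for the conclusions stated, but it should be recorded for the later lemmas.
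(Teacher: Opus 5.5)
Your proof is correct and follows the paper's argument: the IVT on the continuous profile $\mathcal M$, containment via \eqref{eq:Kfacts}, and a compactness-attained touch point excluded from $\SX$ because $c>1$. The only cosmetic difference is that you minimize over the level set $\{t:\mathcal M(t)=c\}$, while the paper minimizes over $\{t:\mathcal M(t)\ge c\}$ and then uses the IVT to show $\mathcal M(\tau_c)=c$; since $\mathcal M(0)<c$, the two choices give the same $\tau_c$.
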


\begin{proof}
$\mathcal M(0)\le1<c$ since $B_0\subseteq\SX\subseteq\{y\le1\}$, and
$\mathcal M(1)\ge y_t>c$ since $B_1\subseteq\SZ\subseteq\{y\ge y_t\}$ and $B$
is nonempty.  The set $\{t:\mathcal M(t)\ge c\}$ is closed (Lemma
\ref{lem:cont}), nonempty ($t=1$), and does not contain $0$; let $\tau_c$ be
its minimum.  Then $\mathcal M(\tau_c)\ge c$; if $\mathcal M(\tau_c)>c$, the
intermediate value theorem on $[0,\tau_c]$ would produce an earlier time in
the set, contradicting minimality; hence $\mathcal M(\tau_c)=c$.  Thus every
point of $B_{\tau_c}$ has $y\le c$, and $B_{\tau_c}\subseteq\K$, so
\eqref{eq:Kfacts} gives the containment.  By compactness the maximum is
attained at some $q\in B_{\tau_c}$ with $q\cdot\yhat=c>1$; since
$\SX\subseteq\{y\le1\}$, necessarily $q\in\Stub(c)$, i.e.\
$q_x,q_z\in[0,1]$.
\end{proof}

This first-crossing device --- extracting an extremal time at which a
containment and a wall contact hold simultaneously --- is the
three-dimensional analogue of a pattern used by Mushkarov and Nikolov in the
box setting \cite{MN26} (their ``last moment at which six vertices remain in
the arm'' argument, and their Lemma~6 forcing wall contact); the touch clause
is essential, since without it a body parked deep in the entry arm satisfies
every containment and no bound could follow (cf.\ Remark~\ref{rem:K1}).

\begin{lemma}[deep point]\label{lem:deep}
Let $c\in(1,y_t)$ and $\tau_c$ be as in Lemma~\ref{lem:event}, and suppose in
addition that $B_{\tau_c}\cap\{y\le1\}\neq\emptyset$.  Then there exists
$p^*\in B_{\tau_c}$ with
\[
p^*\cdot\yhat=1,\qquad p^*\in[0,1]\times\{1\}\times[0,1].
\]
\end{lemma}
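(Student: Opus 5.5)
The plan is to apply the intermediate value theorem on the connected set $B_{\tau_c}$ to the continuous function $p\mapsto p\cdot\yhat$. By hypothesis $B_{\tau_c}$ contains a point with $y\le1$. By Lemma~\ref{lem:event} it contains the touch point $q$ with $q\cdot\yhat=c>1$. Since $B_{\tau_c}=g_{\tau_c}(B)$ is a continuous image of the connected set $B$, it is connected. The image of $B_{\tau_c}$ under the $y$-coordinate is therefore a connected subset of $\R$, i.e.\ an interval. That interval contains a value $\le1$ and the value $c>1$, so it contains $1$. Hence there is $p^*\in B_{\tau_c}$ with $p^*\cdot\yhat=1$.

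It remains to check the footprint condition $p^*\in[0,1]\times\{1\}\times[0,1]$. By Lemma~\ref{lem:event}, $B_{\tau_c}\subseteq\SX\cup\Stub(c)$, so $p^*$ lies in $\SX$ or in $\Stub(c)$.
\begin{itemize}
\item If $p^*\in\Stub(c)$, then $p^*_x,p^*_z\in[0,1]$ directly.
\item If $p^*\in\SX$, then $p^*_z\in[0,1]$ and $p^*_x\le0$.
\end{itemize}
In the second case $p^*_x\in[0,1]$ fails unless $p^*_x=0$. So the naive choice of $p^*$ need not have unit footprint, and this second case is the one real obstacle.

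To handle it, I will not take an arbitrary point at height $1$. Instead, I will work with the compact set $A:=B_{\tau_c}\cap\{y\ge1\}$. It is nonempty, since it contains $q$.

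Every point of $B_{\tau_c}$ with $y>1$ lies in $\Stub(c)$, because $\SX\subseteq\{y\le1\}$. Hence every such point has $x\in[0,1]$. By closedness of $\Stub(c)$, the same holds for limits of such points. So it suffices to find a point of $B_{\tau_c}$ at height exactly $1$ that is a limit of points of $B_{\tau_c}$ with $y>1$.

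Suppose no point at height $1$ is such a limit. Then the set $B_{\tau_c}\cap\{y>1\}$ is closed: a limit point would have $y\ge1$, and $y=1$ is excluded by assumption. This set is also relatively open in $B_{\tau_c}$, and it is nonempty because it contains $q$. It is proper, because $B_{\tau_c}\cap\{y\le1\}\neq\emptyset$. A nonempty proper subset that is both relatively open and closed contradicts the connectedness of $B_{\tau_c}$.

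Therefore some $p^*$ with $p^*\cdot\yhat=1$ lies in the closure of $B_{\tau_c}\cap\{y>1\}\subseteq\Stub(c)$. Such a point has $p^*_x,p^*_z\in[0,1]$, which is exactly the claimed footprint.

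The only delicate point is this clopen argument, which replaces the plain intermediate value theorem so that the point at height $1$ also carries the unit footprint. The rest follows directly from Lemma~\ref{lem:event} and the facts \eqref{eq:Kfacts}.
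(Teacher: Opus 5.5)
Your proof is correct and takes essentially the paper's route: both obtain $p^*$ as a limit of points of $B_{\tau_c}$ lying strictly above height $1$, which are forced into the closed set $\Stub(c)$ and so have unit footprint. The difference is only in how the limit is produced: the paper picks points at heights $1+\delta$ via the interval image and extracts a convergent sequence by compactness, whereas your clopen argument gets the limit point directly from connectedness and does not need compactness of $B_{\tau_c}$.
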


\begin{proof}
$B_{\tau_c}$ is compact and connected, so its image under $p\mapsto
p\cdot\yhat$ is a compact interval; it contains a value $\le1$ (hypothesis)
and the value $c>1$ (touch point), hence contains $[1,c]$.  For each
$\delta\in(0,c-1]$ pick $p_\delta\in B_{\tau_c}$ with $p_\delta\cdot\yhat
=1+\delta$.  By Lemma~\ref{lem:event}, $p_\delta\in\SX\cup\Stub(c)$, and
$y(p_\delta)>1$ rules out $\SX$; so $p_\delta\in\Stub(c)$, i.e.\
$(p_\delta)_x,(p_\delta)_z\in[0,1]$.  Let $\delta\downarrow0$ along a
sequence for which $p_\delta$ converges (compactness); the limit $p^*$ lies
in the closed set $B_{\tau_c}$, has $p^*\cdot\yhat=1$, and has
$p^*_x,p^*_z\in[0,1]$ (closed conditions).
\end{proof}

\section{The tilt mechanism and the first bound}\label{sec:thmA}

The two marked points $q$ (height $c=1+u$) and $p^*$ (height $1$) of
Lemmas~\ref{lem:event} and~\ref{lem:deep} both lie in the body's entry tube
$T(\tau_c)$ (Lemma~\ref{lem:tube}).  Their vertical separation is $u$, while
their $x$- and $z$-coordinates all lie in $[0,1]$; a tube whose axis is
nearly horizontal cannot contain two such points.  Quantifying this gives the
tilt lemma.  Throughout, $u:=c-1$ and
\[
\mu:=|w(\tau_c)\cdot\yhat|
\]
denotes the tilt of the tube axis towards $\yhat$ at the event time.

\begin{lemma}[tilt]\label{lem:tilt}
Let $c=1+u$ with $u\in(\sqrt2,\,M)$, let $\tau_c,q$ be as in
Lemma~\ref{lem:event}, and assume the deep-point hypothesis
$B_{\tau_c}\cap\{y\le1\}\neq\emptyset$.  Then
\[
\mu\;\ge\;\mu_{\mathrm{lin}}(u):=\frac{u-\sqrt2}{\sqrt{2+u^2}}\;>\;0 .
\]
\end{lemma}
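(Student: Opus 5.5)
The plan is to compare the two marked points inside the single tube $T(\tau_c)$ and split their difference into an axial component and a transverse one. By Lemma~\ref{lem:event} there is a touch point $q\in B_{\tau_c}$ with $q\in[0,1]\times\{c\}\times[0,1]$. Under the deep-point hypothesis, Lemma~\ref{lem:deep} gives $p^*\in B_{\tau_c}$ with $p^*\in[0,1]\times\{1\}\times[0,1]$. By Lemma~\ref{lem:tube} both points lie in the tube $T(\tau_c)$, whose axis direction is $w:=w(\tau_c)$.

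Using the decomposition from Definition~\ref{def:tubedef}, write $q=\rho+s_1w+\xi_1$ and $p^*=\rho+s_2w+\xi_2$, with $\xi_i$ in the cross-section square $Q\perp w$. Put $d:=q-p^*$, $s:=s_1-s_2$ and $\eta:=\xi_1-\xi_2$, so that
\[
d=s\,w+\eta,\qquad \eta\perp w,\qquad |\eta|\le\operatorname{diam}Q=\sqrt2 .
\]

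The argument then rests on two estimates.

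\emph{The $y$-component.} Since $d\cdot\yhat=c-1=u$, we get
\[
u=s\,(w\cdot\yhat)+\eta\cdot\yhat\le|s|\,\mu+|\eta|\le|s|\,\mu+\sqrt2 ,
\]
which rearranges to $|s|\,\mu\ge u-\sqrt2>0$.

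\emph{The axial length.} Because $\eta\perp w$, we have $|s|=|d\cdot w|\le|d|$. The coordinates of $d$ are $d=(a,u,b)$ with $a=q_x-p^*_x\in[-1,1]$ and $b=q_z-p^*_z\in[-1,1]$. Hence $|d|\le\sqrt{2+u^2}$.

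Combining the two estimates gives
\[
\mu\ge\frac{u-\sqrt2}{|s|}\ge\frac{u-\sqrt2}{\sqrt{2+u^2}}=\mu_{\mathrm{lin}}(u).
\]
This is strictly positive because $u>\sqrt2$. The bound $u<M$ is used only to guarantee $c=1+u\in(1,y_t)$, so that Lemmas~\ref{lem:event} and~\ref{lem:deep} apply.

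No step here is genuinely hard. The one point needing care is checking that the transverse parts of both points lie in the same square $Q$. This holds because $\rho$ and the frame are common to the single tube $T(\tau_c)$, and it is what makes $|\eta|\le\sqrt2$ valid rather than $2\sqrt2$. It also explains where the threshold $u>\sqrt2$ comes from.
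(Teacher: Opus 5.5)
Your proof is correct and is essentially the paper's argument: decompose $q-p^*=\sigma w+\eta$ inside the common tube $T(\tau_c)$, bound $|\sigma|\le|q-p^*|\le\sqrt{2+u^2}$ and $|\eta|\le\sqrt2$, and project onto $\yhat$. The only difference is cosmetic: the paper rearranges $u\le\sqrt{2+u^2}\,\mu+\sqrt2$ directly instead of dividing by $|s|$, a division you correctly justify via $|s|\mu\ge u-\sqrt2>0$ (though your aside about $2\sqrt2$ is off, since $|\xi_i|\le\sqrt2/2$ with the triangle inequality already gives $|\eta|\le\sqrt2$; what a single tube really provides is a common axis line, without which $\eta$ would be unbounded).
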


\begin{proof}
Let $p^*$ be the deep point of Lemma~\ref{lem:deep} and write $w:=w(\tau_c)$.
Both $q,p^*\in T(\tau_c)$; decompose (Definition~\ref{def:tubedef})
\[
q-p^*=\sigma w+\eta,\qquad \eta\perp w,
\]
where $\sigma\in\R$ is the difference of axis parameters and $\eta$ the
difference of two cross-section vectors, so $|\eta|\le\sqrt2$ (diameter of the
unit square).  The coordinate differences satisfy $|\Delta x|\le1$,
$\Delta y=c-1=u$, $|\Delta z|\le1$, hence
\begin{equation}\label{eq:qpstar}
\sigma^2+|\eta|^2=|q-p^*|^2\le 2+u^2 .
\end{equation}
Now project onto $\yhat$:
\[
u=(q-p^*)\cdot\yhat=\sigma\,(w\cdot\yhat)+\eta\cdot\yhat
\le|\sigma|\,\mu+|\eta|
\le\sqrt{2+u^2}\,\mu+\sqrt2 ,
\]
and rearranging gives the claim (positivity because $u>\sqrt2$).
\end{proof}

\begin{lemma}[tube--slab estimate]\label{lem:slab}
Let $T$ be a tube with axis direction $w$ and $\mu:=|w\cdot\yhat|>0$.  Then
\[
\vol\bigl(T\cap\{0\le y\le1\}\bigr)\;\le\;\frac{1+\sqrt2}{\mu}.
\]
\end{lemma}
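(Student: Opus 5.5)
Fubini along the tube axis is the natural route: slice the tube perpendicular to $w$ and bound how long each axial line stays inside the slab. Write $T=\{\rho+s\,w+\xi:\ s\in\R,\ \xi\in Q\}$ as in Definition~\ref{def:tubedef}. The map $(s,\xi)\mapsto\rho+sw+\xi$ is an isometry from $\R\times Q$ onto $T$, so
\[
\vol\bigl(T\cap\{0\le y\le1\}\bigr)=\int_Q \lambda_1\bigl(\{s\in\R:\ 0\le(\rho+sw+\xi)\cdot\yhat\le1\}\bigr)\,d\xi ,
\]
where $d\xi$ is two-dimensional measure on the cross-section square $Q$ and $\lambda_1$ is Lebesgue measure on $\R$.

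For a fixed $\xi$, the $y$-coordinate $(\rho+\xi)\cdot\yhat+s\,(w\cdot\yhat)$ is affine in $s$ with slope $\pm\mu\neq0$. Hence the set of admissible $s$ is an interval of length exactly $1/\mu$. Integrating over $Q$, which has area $1$, gives the bound $1/\mu$.

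This is already stronger than the claimed $(1+\sqrt2)/\mu$, since $1\le1+\sqrt2$, so the lemma follows at once. The only points needing care are that $T$ is exactly the isometric image of $\R\times Q$ (true because $w,f_1,f_2$ is orthonormal) and that $\mu>0$, so no chord is infinite.

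The constant $1+\sqrt2$ in the statement presumably comes from a cruder estimate. One such route replaces each axial chord by the segment of the axis through $\rho$ and pays an extra $\sqrt2/\mu$, since $|\xi\cdot\yhat|\le|\xi|\le\sqrt2/2$ widens the admissible slab for the axis parameter by $\sqrt2$. That route gives $(1+\sqrt2)/\mu$, matching the stated form. I would present the exact-chord computation above and note that it implies the stated bound. I expect no real obstacle; the only step requiring verification is the change of variables, which is a rigid motion.
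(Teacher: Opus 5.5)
Your proof is correct, and it gives a sharper bound than the paper's proof. Both arguments use the same isometric parametrization $(s,\xi)\mapsto\rho+sw+\xi$ of $T$ by $\R\times Q$. They differ in how the constraint $0\le y\le1$ becomes a bound on $s$.

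The paper picks one interval $I$ of axial parameters that works for every $\xi\in Q$ at once, absorbing the spread $|\xi\cdot\yhat|\le\sqrt2/2$ into $I$. It concludes $T\cap\{0\le y\le1\}\subseteq\{\rho+sw+\xi:\ s\in I,\ \xi\in Q\}$ with $|I|\le(1+\sqrt2)/\mu$, which is exactly the cruder route you reconstruct in your last paragraph.

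You instead apply Fubini fiber by fiber, letting the $s$-window slide with $\xi$, and each window has length exactly $1/\mu$. Your computation therefore shows $\vol\bigl(T\cap\{0\le y\le1\}\bigr)=1/\mu$ with equality: a plane with normal $\yhat$ cuts the unit-area prism in a section of area $1/\mu$.

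The paper's version yields a containment statement: the slab piece lies in a finite block of the tube of axial length $(1+\sqrt2)/\mu$. That fact is never used later. Yours yields the sharp constant, which would propagate. The second term of $f(u)$ in Theorem~\ref{thm:A} would lose its factor $1+\sqrt2$. The low-part cap $\sqrt2/\mu$ in the proof of Theorem~\ref{thm:B} could also be replaced by $1/\mu$, since that low part lies in $T\cap\{0\le y\le1\}$ as well.
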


\begin{proof}
Write points of $T$ as $p=\rho+s\,w+\xi$ with $\xi\in Q$, $|\xi|\le\sqrt2/2$
(Definition~\ref{def:tubedef}).  Then
$p\cdot\yhat=\rho\cdot\yhat+s\,(w\cdot\yhat)+\xi\cdot\yhat$ with
$|\xi\cdot\yhat|\le\sqrt2/2$, so $p\cdot\yhat\in[0,1]$ forces
$s\,(w\cdot\yhat)$ into an interval of length $1+\sqrt2$, i.e.\ $s$ into an
interval $I$ with $|I|\le(1+\sqrt2)/\mu$.  Hence
$T\cap\{0\le y\le1\}\subseteq\{\rho+sw+\xi:\ s\in I,\ \xi\in Q\}$, whose
volume is $|I|\cdot\operatorname{area}(Q)=|I|\le(1+\sqrt2)/\mu$ (the map
$(s,\xi)\mapsto\rho+sw+\xi$ is an isometry of $\R\times Q$ onto its image).
\end{proof}

\begin{theorem}[first upper bound]\label{thm:A}
Let $M>\sqrt2$ and let $(B,g)$ traverse $\K_M$ weakly.  Then for every
$u\in(\sqrt2,M)$,
\[
\vol(B)\;\le\;f(u):=u+(1+\sqrt2)\,\frac{\sqrt{2+u^2}}{u-\sqrt2}.
\]
In particular, taking $u=3.5217$,
\[
\Vstar(\K_M)\;\le\;f(3.5217)=7.86908\ldots\;<\;7.8691
\qquad\text{for every } M\ge3.53 ,
\]
covering the corridor of Problem~6.64 ($M=4$) and every larger $M$.
\end{theorem}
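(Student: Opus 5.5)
Fix $u\in(\sqrt2,M)$, put $c:=1+u\in(1,y_t)$, and let $\tau:=\tau_c$ be the first-crossing time of Lemma~\ref{lem:event}. The argument splits according to whether the deep-point hypothesis $B_{\tau}\cap\{y\le1\}\neq\emptyset$ holds. Since $g_\tau$ is a rigid motion, $\vol(B)=\vol(B_\tau)$ in either case, so it suffices to bound $\vol(B_\tau)$ by $f(u)$.

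\emph{Case 1: $B_\tau\cap\{y\le1\}=\emptyset$.} Lemma~\ref{lem:event} gives $B_\tau\subseteq\SX\cup\Stub(c)$, and $\SX\subseteq\{y\le1\}$. Hence $B_\tau\subseteq\Stub(c)\cap\{y\ge1\}=[0,1]\times[1,c]\times[0,1]$, whose volume is $u$. This is at most $f(u)$, because the second term of $f$ is positive.

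\emph{Case 2: the deep-point hypothesis holds.} Lemma~\ref{lem:tilt} gives $\mu\ge\mu_{\mathrm{lin}}(u)>0$. Split $B_\tau$ at the plane $\{y=1\}$:
\[
B_\tau\subseteq\bigl(B_\tau\cap\{0\le y\le1\}\bigr)\cup\bigl(B_\tau\cap\{1\le y\le c\}\bigr).
\]
For the lower piece, $\K\subseteq\{y\ge0\}$ and Lemma~\ref{lem:tube} give $B_\tau\cap\{0\le y\le 1\}\subseteq T(\tau)\cap\{0\le y\le1\}$. By Lemma~\ref{lem:slab} its volume is at most $(1+\sqrt2)/\mu\le(1+\sqrt2)\sqrt{2+u^2}/(u-\sqrt2)$. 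For the upper piece, the same containment as in Case 1 puts it inside $[0,1]\times[1,c]\times[0,1]$, so its volume is at most $u$. The two pieces overlap only in a plane, which has measure zero, so adding the bounds gives $\vol(B)\le f(u)$.

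The numerical statement follows by evaluating $f(3.5217)$. Set $u=3.5217$, so $u^2\approx12.40237$, $\sqrt{2+u^2}\approx3.79505$, and $u-\sqrt2\approx2.10749$. The ratio is then $\approx1.80075$; multiplying by $1+\sqrt2\approx2.41421$ gives $\approx4.34738$, and adding $u$ yields $\approx7.86908<7.8691$. The only constraint on $M$ is $u<M$, which holds for every $M\ge3.53$. Since the bound holds for every traversing body, it bounds $\Vstar(\K_M)$.

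There is no deep obstacle, since all the geometric content is in Lemmas~\ref{lem:tilt} and~\ref{lem:slab}. The points that need care are:
\begin{itemize}
\item recognising that the deep-point hypothesis can fail, and that this case is handled trivially by the stub bound;
\item checking that the lower slab uses $y\ge0$ from \eqref{eq:Kfacts}, so that Lemma~\ref{lem:slab} applies to the band $0\le y\le1$ exactly;
\item verifying the decimal evaluation with enough margin.
\end{itemize}
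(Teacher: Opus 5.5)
Your proof is correct and follows the paper's argument step for step: the same first-crossing event at $c=1+u$, the same two cases, and the same split along $\{y=1\}$, with Lemma~\ref{lem:slab} (via $\K\subseteq\{y\ge0\}$) on the low part and the stub volume $u$ on the high part. One cosmetic slip: in Case~2 the piece $B_\tau\cap\{1\le y\le c\}$ need not lie in $[0,1]\times[1,c]\times[0,1]$, because it may contain points of $\SX$ with $x<0$ on the plane $y=1$; that set is null, so the bound survives, but splitting into $\{y\le1\}$ and $\{y>1\}$, as the paper does, avoids the issue, and subadditivity makes your remark about the measure-zero overlap unnecessary.
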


\begin{proof}
Fix $u\in(\sqrt2,M)$ and set $c:=1+u\in(1+\sqrt2,\,y_t)$.  Apply
Lemma~\ref{lem:event} at height $c$ and write $B':=B_{\tau_c}$, so
$\vol(B')=\vol(B)$ and $B'\subseteq\SX\cup\Stub(c)$.

\emph{Case (i): $B'\cap\{y\le1\}=\emptyset$.}  Every point of $B'$ has $y>1$,
which rules out all of $\SX$; hence $B'\subseteq\Stub(c)\cap\{y>1\}
=[0,1]\times(1,c]\times[0,1]$ and $\vol(B)\le u\le f(u)$.

\emph{Case (ii): $B'\cap\{y\le1\}\neq\emptyset$.}  By Lemma~\ref{lem:tilt}
the tube $T:=T(\tau_c)\supseteq B'$ has $\mu\ge\mu_{\mathrm{lin}}(u)>0$.
Split $B'$ along the plane $\{y=1\}$:
\[
\vol(B'\cap\{y\le1\})\le
\vol\bigl(T\cap\{0\le y\le1\}\bigr)\le\frac{1+\sqrt2}{\mu}
\le\frac{1+\sqrt2}{\mu_{\mathrm{lin}}(u)},
\]
using $B'\subseteq\K\subseteq\{y\ge0\}$ \eqref{eq:Kfacts} and
Lemma~\ref{lem:slab}, and
\[
\vol(B'\cap\{y>1\})\le\vol\bigl(\Stub(c)\cap\{y>1\}\bigr)=u ,
\]
since $\SX\cap\{y>1\}=\emptyset$.  Adding,
$\vol(B)\le u+(1+\sqrt2)/\mu_{\mathrm{lin}}(u)=f(u)$.

For the numerical form: if $M\ge3.53$ then $u=3.5217<3.53\le M$ is
admissible, and a direct evaluation gives $f(3.5217)=7.8690829\ldots<7.8691$.
\end{proof}

\begin{remark}[optimization; smaller $M$]\label{rem:optA}
The function $f$ decreases from $+\infty$ (as $u\downarrow\sqrt2$) to its
minimum $7.869083\ldots$, attained near $u=3.5215$, and then increases;
any admissible $u$ near the minimizer gives the same bound to the quoted
precision.  For $M\in(\sqrt2,3.53)$ the theorem still applies with any
$u<M$ and yields $\Vstar(\K_M)\le\inf_{u\in(\sqrt2,M)}f(u)=f(M)$ (the
infimum is a limit of admissible values; $f$ is decreasing there); the bound
blows up as $M\downarrow\sqrt2$, since the mechanism needs stub room above
height $1+\sqrt2$.
\end{remark}

\section{Sharpened constants}\label{sec:thmB}

Theorem~\ref{thm:A} spends the cross-section diameter $\sqrt2$ generously in
two places: the tilt lemma treats the axis separation $|\sigma|$ and the
cross-section separation $|\eta|$ as independent, and the slab lemma charges
the tube for the full width of the slab.  Both can be tightened by elementary
means, at the price of slightly more bookkeeping.  This section proves the
sharp tilt bound (Lemma~\ref{lem:tiltsharp}), replaces the slab estimate by
Fubini chord estimates (Lemma~\ref{lem:chord}), and assembles them into
Theorem~\ref{thm:B}.

\begin{lemma}[sharp tilt]\label{lem:tiltsharp}
Under the hypotheses of Lemma~\ref{lem:tilt},
\[
\mu\;\ge\;\mu_*(u):=\frac{u^2-2}{u^2+2}\;>\;0 .
\]
\end{lemma}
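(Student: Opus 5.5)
The plan is to replace the two independent losses in the proof of Lemma~\ref{lem:tilt} by a single angular estimate. Keep the notation of that proof: $d:=q-p^*=\sigma w+\eta$ with $\eta\perp w$ and $|\eta|\le\sqrt2$. Here $d\cdot\yhat=u$, and the horizontal part $h:=|(d_x,d_z)|$ satisfies $h\le\sqrt2$. Write $|d|=\sqrt{u^2+h^2}$. Replacing $w$ by $-w$ if necessary, we may assume $\sigma\ge0$, which does not change $\mu$.

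We use two angles. Let $\theta\in[0,\pi/2)$ be the angle between $d$ and $\yhat$, so that $\tan\theta=h/u$. Let $\varphi\in[0,\pi/2]$ be the angle between $d$ and $w$. Since $|\eta|$ is the distance from $d$ to the line $\R w$, we get $\sin\varphi=|\eta|/|d|\le\sqrt2/\sqrt{u^2+h^2}$. By the triangle inequality for angles on the sphere, the angle between $w$ and $\yhat$ is at most $\theta+\varphi$. Hence $\mu\ge\cos(\theta+\varphi)$ as soon as $\theta+\varphi\le\pi/2$.

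It therefore suffices to show $\theta+\varphi\le2\alpha$, where $\alpha:=\arctan(\sqrt2/u)$. Because $u>\sqrt2$ we have $\alpha<\pi/4$, so $2\alpha<\pi/2$ and
\[
\cos 2\alpha=\frac{1-\tan^2\alpha}{1+\tan^2\alpha}=\frac{u^2-2}{u^2+2}=\mu_*(u),
\]
which is exactly the claimed bound.

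The main obstacle is that $\theta$ and $\varphi$ cannot be bounded separately by $\alpha$. When $h$ is small, $\varphi$ may reach $\arcsin(\sqrt2/u)>\alpha$, so the trade-off between them must be exploited. To handle this, set $\varphi_{\max}(h):=\arcsin\bigl(\sqrt2/\sqrt{u^2+h^2}\bigr)$, which is well defined because $u^2+h^2>2$. We then maximize $F(h):=\arctan(h/u)+\varphi_{\max}(h)$ over $h\in[0,\sqrt2]$. A direct differentiation gives
\[
F'(h)=\frac{1}{u^2+h^2}\Bigl(u-\frac{\sqrt2\,h}{\sqrt{u^2+h^2-2}}\Bigr).
\]
This is nonnegative if and only if $u^2(u^2+h^2-2)\ge2h^2$, that is, $(u^2-2)(u^2+h^2)\ge0$, which holds because $u>\sqrt2$. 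So $F$ is nondecreasing, and its maximum is
\[
F(\sqrt2)=\arctan\frac{\sqrt2}{u}+\arcsin\frac{\sqrt2}{\sqrt{u^2+2}}=2\alpha .
\]
This gives $\theta+\varphi\le2\alpha<\pi/2$, and hence $\mu\ge\cos2\alpha=\mu_*(u)>0$.

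Two routine checks remain. First, the spherical triangle inequality for the angle between the unit vectors $w$ and $\yhat$, passing through the direction $d/|d|$, should be stated cleanly; it follows because the angle between unit vectors is the geodesic distance on $S^2$. Second, $\mu_*\ge\mu_{\mathrm{lin}}$ should be confirmed as a consistency check, although the argument above does not need it.
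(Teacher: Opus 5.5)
Your proof is correct, and it takes a genuinely different route from the paper's. I checked the derivative of $F$, the factorization $(u^2-2)(u^2+h^2)\ge0$, and the identity $F(\sqrt2)=2\alpha$, and all hold.

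The paper's proof is algebraic. It projects $q-p^*=\sigma w+\eta$ onto $\yhat$ to get $u\le|\sigma|\mu+|\eta|\sqrt{1-\mu^2}$. It encodes the window geometry only through the disk constraint $\sigma^2+|\eta|^2\le 2+u^2$ together with $|\eta|\le\sqrt2$. It then argues by contradiction: assuming $\mu<\mu_*(u)$, it shows the right-hand side is nondecreasing in $|\eta|$, evaluates it at $|\eta|=\sqrt2$, $|\sigma|=u$, and shows $\psi(\mu)=u\mu+\sqrt2\sqrt{1-\mu^2}-u<0$. Both monotonicity steps rely on the auxiliary inequality $\mu_*(u)\le u/\sqrt{2+u^2}$.

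You instead keep the horizontal offset $h$ of $q-p^*$ as the free parameter and bound $|\eta|$ by $\sqrt2$ directly. The spherical triangle inequality then reduces everything to the one-variable monotonicity $F'(h)\ge0$. The underlying geometry is the same: with $\mu=\cos\beta$, $|\sigma|=|d|\cos\varphi$, $|\eta|=|d|\sin\varphi$, the paper's projection inequality becomes $\cos\theta\le\cos(\beta-\varphi)$, i.e.\ $\beta\le\theta+\varphi$. That is essentially your triangle-inequality step in cosine form, so the two proofs differ mainly in how the remaining optimization is parametrized.

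What your version buys:
\begin{itemize}
\item It is direct rather than by contradiction.
\item It needs no auxiliary inequality.
\item It explains the shape of the constant: $\mu_*(u)=\cos 2\alpha$ with $\alpha=\arctan(\sqrt2/u)$. The extremal configuration ($h=|\eta|=\sqrt2$, $|\sigma|=u$, the same one the paper's maximization lands on) splits the tilt into two equal angles $\alpha$.
\end{itemize}
The paper's version avoids your sign normalization $\sigma\ge0$ and any appeal to spherical geometry, at the price of a less transparent computation.
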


\begin{proof}
Keep the notation of the proof of Lemma~\ref{lem:tilt}, write
$A:=\sqrt{2+u^2}$, and set $s:=|\sigma|\ge0$, $h:=|\eta|\in[0,\sqrt2]$.
Since $\eta\perp w$, we have $\eta\cdot\yhat=\eta\cdot P\yhat$, where $P$ is
the orthogonal projection onto $w^\perp$ and $|P\yhat|=\sqrt{1-\mu^2}$; hence
\begin{equation}\label{eq:umax}
u\;\le\;s\,\mu+h\sqrt{1-\mu^2},\qquad s^2+h^2\le A^2,\qquad 0\le h\le\sqrt2 .
\end{equation}
First, an algebraic preliminary: $\mu_*(u)\le u/A$, because squaring the
equivalent inequality $(u^2-2)A\le u(u^2+2)$ reduces it to
$(u^2-2)^2\le u^2(u^2+2)$, i.e.\ $4\le6u^2$, true for $u>\sqrt2$.

Suppose now, for contradiction, that $\mu<\mu_*(u)$; then also $\mu<u/A$.
Bound the right-hand side of \eqref{eq:umax} over all admissible $(s,h)$: for
fixed $h$, the largest admissible $s$ is $\sqrt{A^2-h^2}$, giving
$\varphi(h):=\mu\sqrt{A^2-h^2}+h\sqrt{1-\mu^2}$, and
\[
\varphi'(h)=\sqrt{1-\mu^2}-\frac{\mu h}{\sqrt{A^2-h^2}}\ \ge\ 0
\quad\Longleftrightarrow\quad h^2\le A^2(1-\mu^2),
\]
which holds for all $h\in[0,\sqrt2]$ because $\mu<u/A$ gives
$A^2(1-\mu^2)>A^2-u^2=2\ge h^2$.  So $\varphi$ is nondecreasing on
$[0,\sqrt2]$ and
\[
s\,\mu+h\sqrt{1-\mu^2}\;\le\;\varphi(\sqrt2)=u\,\mu+\sqrt2\sqrt{1-\mu^2}
\]
(using $\sqrt{A^2-2}=u$).  Finally let
$\psi(\mu):=u\mu+\sqrt2\sqrt{1-\mu^2}-u$.  Then $\psi(\mu_*(u))=0$: indeed
$1-\mu_*^2=8u^2/(u^2+2)^2$, so
$\sqrt2\sqrt{1-\mu_*^2}=4u/(u^2+2)=u(1-\mu_*)$.  Moreover
$\psi'(\mu)=u-\sqrt2\,\mu/\sqrt{1-\mu^2}>0$ iff $\mu<u/A$, so $\psi$ is
strictly increasing on $[0,u/A]$, and $\mu<\mu_*(u)\le u/A$ gives
$\psi(\mu)<0$, i.e.\ $u\mu+\sqrt2\sqrt{1-\mu^2}<u$.  Combining the two
displays contradicts \eqref{eq:umax}.  Hence $\mu\ge\mu_*(u)$.
\end{proof}

\begin{remark}
$\mu_*(u)\ge\mu_{\mathrm{lin}}(u)$ for all $u\ge\sqrt2$: dividing the claim
$\frac{(u-\sqrt2)(u+\sqrt2)}{u^2+2}\ge\frac{u-\sqrt2}{\sqrt{u^2+2}}$ by
$u-\sqrt2\ge0$ leaves $(u+\sqrt2)\ge\sqrt{u^2+2}$, i.e.\ $2\sqrt2u\ge0$.  So
Lemma~\ref{lem:tiltsharp} strictly sharpens Lemma~\ref{lem:tilt} for
$u>\sqrt2$ (e.g.\ at $u=3.52$: $0.722$ versus $0.555$), and
Theorem~\ref{thm:A} could already be improved by substituting it; we do not
record the intermediate constant, since the chord estimates below give more.
\end{remark}

\begin{lemma}[chord estimate for tubes]\label{lem:chord}
Let $T$ be a tube with axis direction $w$ (Definition~\ref{def:tubedef}) and
let $v$ be a unit vector with $v\neq\pm w$.  Then every line parallel to $v$
meets $T$ in a segment of length at most
$\sqrt2/|Pv|$, where $P$ is the orthogonal projection onto $w^{\perp}$ and
$|Pv|=\sqrt{1-(w\cdot v)^2}$.
Consequently, for any choice of two coordinate directions
$e_i,e_j\in\{\xhat,\yhat,\zhat\}$, $i\ne j$, with remaining direction $e_k$
and $w\neq\pm e_k$:
\[
\vol\Bigl(T\cap\{p:\ (p\cdot e_i,\ p\cdot e_j)\in[0,1]^2\}\Bigr)
\;\le\;\frac{\sqrt2}{\sqrt{1-(w\cdot e_k)^2}}
=\frac{\sqrt2}{\sqrt{(w\cdot e_i)^2+(w\cdot e_j)^2}} .
\]
\end{lemma}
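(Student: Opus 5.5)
The plan is to prove the chord bound by projecting the line onto the cross-section plane, and then to get the volume estimate by Fubini over lines parallel to $e_k$.

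\emph{Chord bound.} Write points of $T$ as $\rho+sw+\xi$ with $\xi\in Q$, as in Definition~\ref{def:tubedef}. Take a line $\ell=\{a+\lambda v:\lambda\in\R\}$, and let $\ell\cap T$ be the parameter set $\{\lambda:\ a+\lambda v\in T\}$. This set is an interval, since $T$ is convex and closed.

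Apply $P$, the orthogonal projection onto $w^\perp$. A point lies in $T$ exactly when its image under $P$ lies in $P\rho+Q$; the axis coordinate is unconstrained. So the condition on $\lambda$ becomes
\[
P(a-\rho)+\lambda\,Pv\in Q .
\]
Because $v\neq\pm w$, the vector $Pv$ is nonzero. The map $\lambda\mapsto P(a-\rho)+\lambda Pv$ is therefore an affine line in the plane $w^\perp$, traversed at speed $|Pv|$. Its intersection with the unit square $Q$ is a segment of length at most $\operatorname{diam}Q=\sqrt2$. Hence the admissible $\lambda$ form an interval of length at most $\sqrt2/|Pv|$. Since $v$ is a unit vector, this is also the Euclidean length of the chord. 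The identity $|Pv|^2=1-(w\cdot v)^2$ is Pythagoras.

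\emph{Volume consequence.} Set $v=e_k$ and integrate by Fubini over $(p\cdot e_i,p\cdot e_j)\in[0,1]^2$. Each fiber of the set is a line parallel to $e_k$, and it meets $T$ in a set of one-dimensional measure at most $\sqrt2/\sqrt{1-(w\cdot e_k)^2}$. The base square has area $1$, so the volume bound follows. The last equality in the statement is $1-(w\cdot e_k)^2=(w\cdot e_i)^2+(w\cdot e_j)^2$, since $w$ is a unit vector.

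I do not expect a genuine obstacle. The one point needing care is the measurability and Fubini bookkeeping, which is harmless because $T$ is closed. The only conceptual step is recognising that the chord length is the diameter of $Q$ divided by the projected speed $|Pv|$.
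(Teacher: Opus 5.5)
Your proposal is correct and follows the paper's proof essentially step for step. Membership in $T$ depends only on $P(p-\rho)\in Q$, so the projected line, moving at speed $|Pv|>0$, stays in $Q$ for parameter length at most $\operatorname{diam}Q/|Pv|=\sqrt2/|Pv|$; the volume bound then follows by Fubini along $e_k$ over the unit square, together with $|Pe_k|^2=(w\cdot e_i)^2+(w\cdot e_j)^2$.
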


\begin{proof}
$T$ is convex, so a line $\ell:t\mapsto p_0+t\,v$ meets it in a (possibly
empty or degenerate) segment.  Membership of $p$ in $T$ depends only on
$P(p-\rho)$, which must lie in the cross-section square $Q$
(Definition~\ref{def:tubedef}); along $\ell$, $P(p_0+tv-\rho)$ moves on a
line in the cross-plane with speed $|Pv|>0$, and its time inside the square
is at most $\operatorname{diam}(Q)/|Pv|=\sqrt2/|Pv|$.  For the volume bound,
apply Fubini along $e_k$: the set on the left has $(e_i,e_j)$-shadow inside
the unit square, and each fiber in the $e_k$ direction is contained in a
chord of $T$ parallel to $e_k$, of length at most $\sqrt2/|Pe_k|$; integrate
over the unit square of area $1$.  Finally
$|Pe_k|^2=1-(w\cdot e_k)^2=(w\cdot e_i)^2+(w\cdot e_j)^2$ as $w$ is a unit
vector.
\end{proof}

Applied at an event time, Lemma~\ref{lem:chord} caps both parts of the body.
Recall $\mu=|w(\tau_c)\cdot\yhat|$ and abbreviate henceforth
\begin{equation}\label{eq:Psi}
g(\mu):=\frac{\sqrt2}{\mu}+\frac{\sqrt2}{\sqrt{1-\mu^2}},\qquad
\Psi(u):=\max\Bigl\{\underbrace{\frac{\sqrt2\,(u^2+2)}{u^2-2}
+\frac{u^2+2}{2u}}_{=:B_1(u)},\ \
\underbrace{u+\frac{\sqrt2\,u}{\sqrt{u^2-2}}}_{=:B_2(u)}\Bigr\}
\qquad(u>\sqrt2).
\end{equation}

\begin{lemma}[calculus of the bound functions]\label{lem:calc}
For $u>\sqrt2$, with $\mu_*(u)=\frac{u^2-2}{u^2+2}$ and
$\mu_{\mathrm{cap}}(u):=\frac{\sqrt{u^2-2}}{u}$:
\begin{enumerate}
\item[(a)] $0<\mu_*(u)\le\mu_{\mathrm{cap}}(u)<1$, and
$g(\mu_*(u))=B_1(u)$, $g(\mu_{\mathrm{cap}}(u))=B_2(u)$;
\item[(b)] $g$ is strictly decreasing on $(0,1/\sqrt2\,]$ and strictly
increasing on $[\,1/\sqrt2,1)$, with $g(1/\sqrt2)=4$;
\item[(c)] $B_1$ is strictly decreasing on $(\sqrt2,\,2+\sqrt2)$ and
$B_1(2+\sqrt2)=4$; $B_2$ is strictly decreasing on $(\sqrt2,2)$ and strictly
increasing on $(2,\infty)$;
\item[(d)] $B_1=B_2$ has a unique solution $u_B$ in $(2,\,2+\sqrt2)$, and
$u_B\in(2.60114,\,2.60115)$; moreover
$\Psi(u)\ge\Psi(u_B)=B_1(u_B)$ for all $u>\sqrt2$, and
$\Psi(u_B)\le B_1(2.60114)=4.2861666\ldots<4.2862$.
\end{enumerate}
\end{lemma}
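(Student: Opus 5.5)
The plan is direct elementary calculus, treating (a)--(d) in order, since each part feeds the next.

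\emph{Part (a).} Positivity of $\mu_*$ and $\mu_{\mathrm{cap}}<1$ are immediate for $u>\sqrt2$. For $\mu_*\le\mu_{\mathrm{cap}}$, divide by $\sqrt{u^2-2}>0$; the claim becomes $u\sqrt{u^2-2}\le u^2+2$. This is obvious, since the left side is less than $u^2$. For the identities, use $1-\mu_*^2=8u^2/(u^2+2)^2$, already computed in the proof of Lemma~\ref{lem:tiltsharp}. Then $\sqrt2/\sqrt{1-\mu_*^2}=(u^2+2)/(2u)$, and $\sqrt2/\mu_*=\sqrt2(u^2+2)/(u^2-2)$, which gives $B_1$. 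Similarly $1-\mu_{\mathrm{cap}}^2=2/u^2$, so $\sqrt2/\sqrt{1-\mu_{\mathrm{cap}}^2}=u$, and $\sqrt2/\mu_{\mathrm{cap}}=\sqrt2u/\sqrt{u^2-2}$, which gives $B_2$.

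\emph{Part (b).} Compute
\[
g'(\mu)=\sqrt2\bigl(-\mu^{-2}+\mu(1-\mu^2)^{-3/2}\bigr).
\]
Its sign is the sign of $\mu^3-(1-\mu^2)^{3/2}$, i.e.\ of $\mu^2-(1-\mu^2)$. So $g'$ is negative below $1/\sqrt2$ and positive above. At $\mu=1/\sqrt2$ we get $g=2+2=4$.

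\emph{Part (c).} For $B_1$, note that $\mu_*$ is strictly increasing in $u$. We have $\mu_*(2+\sqrt2)=1/\sqrt2$, which amounts to checking $(u^2-2)\sqrt2=u^2+2$, i.e.\ $u^2=2(\sqrt2+1)^2$. Hence on $(\sqrt2,2+\sqrt2)$ the values $\mu_*(u)$ lie in $(0,1/\sqrt2)$, and by (b) the composition $B_1=g\circ\mu_*$ is strictly decreasing, with value $4$ at the endpoint. For $B_2$, use that $\mu_{\mathrm{cap}}$ is increasing and $\mu_{\mathrm{cap}}(2)=1/\sqrt2$; apply (b) on each side of $u=2$.

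\emph{Part (d).} This is the main obstacle, since it combines monotonicity with numerics. On $(2,2+\sqrt2)$, $B_1-B_2$ is strictly decreasing, because $B_1$ decreases and $B_2$ increases. At $u=2$, evaluate $B_1(2)=3\sqrt2+1.5\approx5.74>4=B_2(2)$. At $u=2+\sqrt2$, we have $B_1=4<B_2$, since $B_2>4$ for $u>2$. This gives uniqueness and existence of $u_B$. To localize it, evaluate $B_1-B_2$ at $2.60114$ and $2.60115$ with enough rigor (exact rational bounds on the square roots) to see the sign change.

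For the global minimum $\Psi(u)\ge\Psi(u_B)$, split into three ranges:
\begin{itemize}
\item On $(\sqrt2,u_B]$, $\Psi\ge B_1\ge B_1(u_B)$, since $B_1$ is decreasing there; note that $u_B<2+\sqrt2$.
\item On $[u_B,\infty)$, $\Psi\ge B_2\ge B_2(u_B)=B_1(u_B)$, since $B_2$ is increasing there (as $u_B>2$).
\item For $u\le2$ nothing extra is needed, because $u\le2<u_B$ is already covered by the first range.
\end{itemize}

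Finally, $u_B>2.60114$ and $B_1$ is decreasing, so $\Psi(u_B)=B_1(u_B)\le B_1(2.60114)$. The last step is a direct numerical evaluation, done with interval arithmetic for safety.
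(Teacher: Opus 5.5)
Your proof is correct. Parts (a), (b) and (d) follow the paper's argument essentially step for step. In (a) you divide by $\sqrt{u^2-2}$ where the paper squares, and in (d) your third range is redundant but harmless.

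The genuine difference is part (c). The paper differentiates $B_1$ and $B_2$ directly:
\begin{itemize}
\item it shows $B_1'<0$ on $(\sqrt2,2+\sqrt2)$ via the reduction $16\sqrt2\,u^3>(u^2-2)^3\iff u^2-2\sqrt2u-2<0$;
\item it evaluates $B_1(2+\sqrt2)=4$ by hand;
\item it reads off the sign of $B_2'(u)=1-2\sqrt2/(u^2-2)^{3/2}$.
\end{itemize}
You instead write $B_1=g\circ\mu_*$ and $B_2=g\circ\mu_{\mathrm{cap}}$ using the identities of (a). You note that $\mu_*=1-4/(u^2+2)$ and $\mu_{\mathrm{cap}}=\sqrt{1-2/u^2}$ are strictly increasing, with $\mu_*(2+\sqrt2)=\mu_{\mathrm{cap}}(2)=1/\sqrt2$. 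Everything then follows from the unimodality of $g$ in (b).

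What your route buys: it needs no further derivative computations and makes (c) a corollary of (a) and (b). It also explains the structure. The value $4$ at $u=2+\sqrt2$ and the turning point of $B_2$ at $u=2$ are both the minimum $g(1/\sqrt2)=4$ seen through the two reparametrizations. As a bonus, it shows $B_1$ increases beyond $2+\sqrt2$ and that $B_2(2+\sqrt2)>4$, which you use in (d).

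What the paper's route buys: its direct computation keeps (c) independent of (a) and (b), so each claim can be checked in isolation.
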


\begin{proof}
(a) Positivity of $\mu_*$ is clear for $u>\sqrt2$, and
$\mu_{\mathrm{cap}}<1$ since $u^2-2<u^2$.  For $\mu_*\le\mu_{\mathrm{cap}}$,
square the equivalent inequality
$u(u^2-2)\le(u^2+2)\sqrt{u^2-2}$ and cancel a factor $u^2-2>0$ to get
$u^2(u^2-2)\le(u^2+2)^2$, i.e.\ $0\le6u^2+4$.  The identity $g(\mu_*)=B_1$ follows from
$1-\mu_*^2=8u^2/(u^2+2)^2$, so $\sqrt2/\sqrt{1-\mu_*^2}=(u^2+2)/(2u)$; the
identity $g(\mu_{\mathrm{cap}})=B_2$ from
$1-\mu_{\mathrm{cap}}^2=2/u^2$, so $\sqrt2/\sqrt{1-\mu_{\mathrm{cap}}^2}=u$.

(b) $g'(\mu)=\sqrt2\bigl(-\mu^{-2}+\mu(1-\mu^2)^{-3/2}\bigr)$, whose sign is
that of $\mu^3-(1-\mu^2)^{3/2}$, i.e.\ of $\mu^2-(1-\mu^2)$, i.e.\ negative
for $\mu<1/\sqrt2$ and positive for $\mu>1/\sqrt2$; and
$g(1/\sqrt2)=\sqrt2\cdot\sqrt2+\sqrt2\cdot\sqrt2=4$.

(c) $B_1'(u)=-\dfrac{8\sqrt2\,u}{(u^2-2)^2}+\dfrac12-\dfrac1{u^2}$.  For
$u\in(\sqrt2,2+\sqrt2)$ one has $B_1'<0$: the inequality
$\frac{8\sqrt2u}{(u^2-2)^2}>\frac{u^2-2}{2u^2}$ is equivalent to
$16\sqrt2\,u^3>(u^2-2)^3$, i.e.\ (taking cube roots, both sides positive)
$2\sqrt2\,u>u^2-2$, i.e.\ $u^2-2\sqrt2u-2<0$, which holds exactly for
$u<\sqrt2+2$ (the positive root).  At $u=2+\sqrt2$: $u^2+2=8+4\sqrt2$ and
$u^2-2=4+4\sqrt2$, so
$B_1=\frac{\sqrt2(8+4\sqrt2)}{4+4\sqrt2}+\frac{8+4\sqrt2}{2(2+\sqrt2)}
=\frac{4\sqrt2(2+\sqrt2)}{4(1+\sqrt2)}+\frac{4(2+\sqrt2)}{2(2+\sqrt2)}
=2+2=4$.  For $B_2$:
$B_2'(u)=1-\frac{2\sqrt2}{(u^2-2)^{3/2}}$, which vanishes iff
$(u^2-2)^{3/2}=2\sqrt2$ iff $u^2-2=2$ iff $u=2$, is negative before and
positive after.

(d) On $[2,2+\sqrt2]$, $B_1$ is continuous and strictly decreasing and $B_2$
continuous and strictly increasing, so $B_1-B_2$ is strictly decreasing
there; $B_1(2)=3\sqrt2+\tfrac32>4=B_2(2)$ and
$B_1(2+\sqrt2)=4<B_2(2+\sqrt2)$, so there is a unique root
$u_B\in(2,2+\sqrt2)$.  Direct evaluation gives
$(B_1-B_2)(2.60114)=+7.1\cdot10^{-6}>0$ and
$(B_1-B_2)(2.60115)=-9.6\cdot10^{-6}<0$, whence
$u_B\in(2.60114,2.60115)$.  For the global minimality of $\Psi$ at $u_B$:
if $u\le u_B$ then, since $u_B<2+\sqrt2$ and $B_1$ is decreasing on
$(\sqrt2,2+\sqrt2)$,
$\Psi(u)\ge B_1(u)\ge B_1(u_B)=\Psi(u_B)$; if $u\ge u_B$ then, since
$u_B>2$ and $B_2$ is increasing on $(2,\infty)$,
$\Psi(u)\ge B_2(u)\ge B_2(u_B)=B_1(u_B)=\Psi(u_B)$.  Finally, since $B_1$ is
decreasing and $u_B\ge2.60114$,
$\Psi(u_B)=B_1(u_B)\le B_1(2.60114)=4.28616\ldots<4.2862$.
\end{proof}

\begin{theorem}[sharpened upper bound]\label{thm:B}
Let $M>\sqrt2$ and let $(B,g)$ traverse $\K_M$ weakly.  Then for every
$u\in(\sqrt2,M)$,
\[
\vol(B)\;\le\;\Psi(u),
\]
with $\Psi$ as in \eqref{eq:Psi}.  In particular, for every $M\ge2.61$ the
choice $u=u_B$ of Lemma~\ref{lem:calc}(d) is admissible and
\[
\Vstar(\K_M)\;\le\;\Psi(u_B)\;<\;4.2862 ,
\]
covering the corridor of Problem~6.64 ($M=4$) and every larger $M$.
\end{theorem}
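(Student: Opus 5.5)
The plan is to run the proof of Theorem~\ref{thm:A} again with the same event and the same case split, replacing the tilt bound by Lemma~\ref{lem:tiltsharp} and the slab bound by two applications of Lemma~\ref{lem:chord}, one for each part of the body. Fix $u\in(\sqrt2,M)$ and put $c:=1+u$. Apply Lemma~\ref{lem:event} at height $c$ and write $B':=B_{\tau_c}\subseteq\SX\cup\Stub(c)$. In Case (i), where $B'\cap\{y\le1\}=\emptyset$, we get $\vol(B)\le u$ exactly as before, and $u<B_2(u)\le\Psi(u)$. In Case (ii), Lemma~\ref{lem:tiltsharp} gives $\mu\ge\mu_*(u)$ for the entry tube $T:=T(\tau_c)\supseteq B'$.

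\emph{Lower part.} The set $B'\cap\{y\le1\}$ lies in $T\cap\{(p\cdot\yhat,p\cdot\zhat)\in[0,1]^2\}$, using $\K\subseteq\{y\ge0\}$ and $\SX\cup\Stub(c)\subseteq\{z\in[0,1]\}$. Since $\mu>0$, the axis $w$ is not $\pm\xhat$. Lemma~\ref{lem:chord} with $e_k=\xhat$ then gives
\[
\vol\le\frac{\sqrt2}{\sqrt{(w\cdot\yhat)^2+(w\cdot\zhat)^2}}\le\frac{\sqrt2}{\mu}.
\]

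\emph{Upper part.} The set $B'\cap\{y>1\}$ lies in $[0,1]\times(1,c]\times[0,1]$. This gives the trivial bound $u$. When $\mu<1$ (so $w\ne\pm\yhat$), Lemma~\ref{lem:chord} with $(e_i,e_j)=(\xhat,\zhat)$ and $e_k=\yhat$ also gives the bound $\sqrt2/\sqrt{1-\mu^2}$. Adding the two parts,
\[
\vol(B)\le\frac{\sqrt2}{\mu}+\min\Bigl\{u,\ \frac{\sqrt2}{\sqrt{1-\mu^2}}\Bigr\},\qquad \mu\in[\mu_*(u),1],
\]
where the $\min$ is read as $u$ when $\mu=1$.

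\emph{Maximizing over $\mu$.} I would split at $\mu_{\mathrm{cap}}(u)$, the value at which $\sqrt2/\sqrt{1-\mu^2}=u$; by Lemma~\ref{lem:calc}(a) it satisfies $\mu_*\le\mu_{\mathrm{cap}}<1$.
\begin{itemize}
\item If $\mu\ge\mu_{\mathrm{cap}}$, use the bound $u$ for the upper part. The total is $u+\sqrt2/\mu\le u+\sqrt2/\mu_{\mathrm{cap}}=u+\sqrt2u/\sqrt{u^2-2}=B_2(u)$.
\item If $\mu\in[\mu_*,\mu_{\mathrm{cap}}]$, the total is at most $g(\mu)$. By Lemma~\ref{lem:calc}(b), $g$ is quasi-convex (decreasing, then increasing), so its maximum on a closed interval is attained at an endpoint. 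Hence $g(\mu)\le\max\{g(\mu_*),g(\mu_{\mathrm{cap}})\}=\max\{B_1(u),B_2(u)\}$ by Lemma~\ref{lem:calc}(a).
\end{itemize}
Together these give $\vol(B)\le\Psi(u)$. For the numerical statement, $u_B<2.60115<2.61\le M$, so $u_B$ is admissible and Lemma~\ref{lem:calc}(d) finishes the proof.

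The main obstacle is bookkeeping rather than any single estimate. Each region must be certified to lie in the correct unit-square shadow; for the lower part this needs $z\in[0,1]$ on all of $\SX\cup\Stub(c)$, and not merely on the stub. The degenerate axes $w=\pm\xhat$ and $w=\pm\yhat$, which Lemma~\ref{lem:chord} excludes, must be handled separately: the first is ruled out by $\mu>0$, and the second is covered by the trivial bound $u$. Finally, the maximum over the admissible interval for $\mu$ reduces to the two endpoint values $B_1$ and $B_2$ only because $g$ is unimodal.
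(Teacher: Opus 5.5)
Your proposal is correct and follows the paper's proof step for step. It uses the same first-crossing event and case split, the sharp tilt lemma, and the two chord estimates: fibers along $\xhat$ for the low part, and fibers along $\yhat$ capped by $u$ for the high part. It then maximizes $g$ over $[\mu_*,\mu_{\mathrm{cap}}]$ at the endpoints and bounds the branch $\mu>\mu_{\mathrm{cap}}$ by $B_2(u)$. Your explicit remark that $\mu>0$ excludes $w=\pm\xhat$ in the low-part chord estimate is a detail the paper leaves implicit.
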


\begin{proof}
Fix $u\in(\sqrt2,M)$, set $c:=1+u<y_t$, and let $\tau_c$, $q$,
$B':=B_{\tau_c}$ be as in Lemma~\ref{lem:event}.

\emph{Case (i): $B'\cap\{y\le1\}=\emptyset$.}  As in Theorem~\ref{thm:A},
$\vol(B)\le u<B_2(u)\le\Psi(u)$.

\emph{Case (ii): $B'\cap\{y\le1\}\neq\emptyset$.}  Let $T:=T(\tau_c)$ be the
entry tube, $w:=w(\tau_c)$ its axis, $\mu=|w\cdot\yhat|$.  By
Lemma~\ref{lem:tiltsharp}, $\mu\in[\mu_*(u),1]$.  Split $B'$ along
$\{y=1\}$ and bound the two parts separately.

\emph{The low part.}  $B'\cap\{y\le1\}$ is contained in
$\SX\cup\Stub(c)\subseteq\{0\le z\le1\}$, in $\{0\le y\}$ by
\eqref{eq:Kfacts}, and in $T$; so its $(y,z)$-shadow lies in $[0,1]^2$ and
Lemma~\ref{lem:chord} (fibers along $\xhat$) gives
\[
\vol\bigl(B'\cap\{y\le1\}\bigr)\;\le\;
\frac{\sqrt2}{\sqrt{(w\cdot\yhat)^2+(w\cdot\zhat)^2}}\;\le\;\frac{\sqrt2}{\mu}.
\]

\emph{The high part.}  $B'\cap\{y>1\}$ is contained in $\Stub(c)$ (as
$\SX\cap\{y>1\}=\emptyset$), hence its $(x,z)$-shadow lies in $[0,1]^2$ and
each fiber parallel to $\yhat$ has length at most $u$ (the stub's height
range $(1,c]$); if $\mu<1$, each such fiber is moreover contained in a chord
of $T$ parallel to $\yhat$, of length at most $\sqrt2/\sqrt{1-\mu^2}$
(Lemma~\ref{lem:chord}).  Fubini over the unit square gives
\[
\vol\bigl(B'\cap\{y>1\}\bigr)\;\le\;
\min\Bigl\{u,\ \frac{\sqrt2}{\sqrt{1-\mu^2}}\Bigr\}
\]
(with the convention that the second entry is $+\infty$ when $\mu=1$).

\emph{Combining.}  With $\mu_{\mathrm{cap}}=\mu_{\mathrm{cap}}(u)$ as in
Lemma~\ref{lem:calc}, note that
$\sqrt2/\sqrt{1-\mu^2}\le u\iff\mu\le\mu_{\mathrm{cap}}$.  If
$\mu\in[\mu_*(u),\mu_{\mathrm{cap}}]$, the two caps sum to
$g(\mu)$, and since $g$ is decreasing-then-increasing
(Lemma~\ref{lem:calc}(b)) its maximum over the interval is at an endpoint:
\[
\vol(B)\le g(\mu)\le\max\{g(\mu_*(u)),\,g(\mu_{\mathrm{cap}}(u))\}
=\max\{B_1(u),B_2(u)\}=\Psi(u),
\]
by Lemma~\ref{lem:calc}(a).  If $\mu\in(\mu_{\mathrm{cap}},1]$, the caps sum
to $u+\sqrt2/\mu<u+\sqrt2/\mu_{\mathrm{cap}}=B_2(u)\le\Psi(u)$.  In all
cases $\vol(B)\le\Psi(u)$.

For the numerical form: if $M\ge2.61$ then $u_B<2.60115<2.61\le M$ is
admissible, and Lemma~\ref{lem:calc}(d) gives
$\vol(B)\le\Psi(u_B)<4.2862$.
\end{proof}

\begin{remark}[smaller $M$; shape of the bound]\label{rem:optB}
For $M\in(\sqrt2,2.61)$ the theorem yields
$\Vstar(\K_M)\le\inf_{u\in(\sqrt2,M)}\Psi(u)$, which blows up as
$M\downarrow\sqrt2$; numerically the infimum equals $\Psi(M^-)$ in this
range.  At $u_B$ the forced tilt is $\mu_*(u_B)=0.5437\ldots$, i.e.\ the
tube axis is forced at least $\arcsin(0.5437)\approx32.9^\circ$ out of the
horizontal plane, and the two caps are
$\approx1.686$ (high part) and $\approx2.601$ (low part).  The bound
$\Psi(u_B)=4.28616\ldots$ improves Theorem~\ref{thm:A} by the factor
$0.545$.
\end{remark}

\section{A computed certificate}\label{sec:certC}

At the event time $\tau_c$, the complete information extracted by
Sections~\ref{sec:events}--\ref{sec:thmB} about the body is: it lies in
$\SX\cup\Stub(c)$, it lies in its entry tube $T(\tau_c)$, that tube meets the
two unit windows at heights $1$ and $c$, and (in case (ii)) the tube's axis
is tilted by at least $\mu_*(u)$.  Theorems~\ref{thm:A} and~\ref{thm:B}
bound the corridor volume of such a tube by closed-form estimates; the
estimates are not sharp, and the quantity they estimate is a global optimum
over an explicit compact five-parameter family of tubes --- a shape amenable
to certified computation.  This section records the resulting
computer-assisted bound.  The mathematical reduction
(Proposition~\ref{prop:family}) is proved here in full from the lemmas
already established; the computation, its soundness audit and its
verification gates are summarized, with the code, logs and resumable search
frontiers archived as artifacts (Appendix~\ref{app:repro}).

\begin{proposition}[event-forced tube family]\label{prop:family}
Let $u\in(\sqrt2,M)$, $c:=1+u$, and define the two windows
\[
W_1:=[0,1]\times\{1\}\times[0,1],\qquad
W_c:=[0,1]\times\{c\}\times[0,1],
\]
and the family $\mathcal A_u$ of all tubes $T$
(Definition~\ref{def:tubedef}) with axis direction $w$ such that
\[
\text{(i) } |w\cdot\yhat|\ge\mu_*(u);\qquad
\text{(ii) } T\cap W_c\neq\emptyset;\qquad
\text{(iii) } T\cap W_1\neq\emptyset .
\]
Then every $(B,g)$ that weakly traverses $\K_M$ satisfies
\[
\vol(B)\;\le\;\max\bigl\{u,\ V_{\mathrm{ii}}(u)\bigr\},
\qquad
V_{\mathrm{ii}}(u):=\sup_{T\in\mathcal A_u}
\vol\bigl(T\cap(\SX\cup\Stub(c))\bigr),
\]
and $V_{\mathrm{ii}}(u)\le\Psi(u)<\infty$ with $\Psi$ as in \eqref{eq:Psi}.
\end{proposition}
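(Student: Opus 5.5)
The plan is to rerun the event argument of Theorems~\ref{thm:A} and~\ref{thm:B}, but to stop before the closed-form caps. Fix $u\in(\sqrt2,M)$ and $c=1+u<y_t$. Let $\tau_c$, $q$ and $B':=B_{\tau_c}$ be as in Lemma~\ref{lem:event}, so that $\vol(B')=\vol(B)$ and $B'\subseteq\SX\cup\Stub(c)$.

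\emph{Case (i): $B'\cap\{y\le1\}=\emptyset$.} This is verbatim the argument in Theorem~\ref{thm:A}: $B'\subseteq[0,1]\times(1,c]\times[0,1]$, so $\vol(B)\le u$.

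\emph{Case (ii): $B'\cap\{y\le1\}\neq\emptyset$.} I will show that the entry tube $T:=T(\tau_c)$ lies in $\mathcal A_u$. Condition (i) is Lemma~\ref{lem:tiltsharp}. For condition (ii), the touch point $q$ lies in $B'\subseteq T$ (Lemma~\ref{lem:tube}) and in $W_c$ (Lemma~\ref{lem:event}). For condition (iii), the deep point $p^*$ of Lemma~\ref{lem:deep} lies in $B'\subseteq T$ and in $W_1$. Since $B'\subseteq T\cap(\SX\cup\Stub(c))$, we get $\vol(B)\le\vol(T\cap(\SX\cup\Stub(c)))\le V_{\mathrm{ii}}(u)$. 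Combining the two cases gives $\vol(B)\le\max\{u,V_{\mathrm{ii}}(u)\}$.

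It remains to show $V_{\mathrm{ii}}(u)\le\Psi(u)$, which also gives finiteness. The point is that the proof of Theorem~\ref{thm:B}, in case (ii), used nothing about $B'$ except three facts: $B'\subseteq T$, $B'\subseteq\SX\cup\Stub(c)$, and $\mu\ge\mu_*(u)$. So take an arbitrary $T\in\mathcal A_u$ and set $S:=T\cap(\SX\cup\Stub(c))$. Split $S$ along $\{y=1\}$.
\begin{itemize}
\item The low part lies in $\{0\le y\le1\}\cap\{0\le z\le1\}\cap T$. Lemma~\ref{lem:chord}, with fibres along $\xhat$, bounds its volume by $\sqrt2/\mu$.
\item The high part lies in $\Stub(c)\cap\{y>1\}\cap T$. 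Fubini along $\yhat$, combined with Lemma~\ref{lem:chord}, bounds its volume by $\min\{u,\sqrt2/\sqrt{1-\mu^2}\}$.
\end{itemize}
The same endpoint analysis as in Theorem~\ref{thm:B}, via Lemma~\ref{lem:calc}(a),(b), then gives $\vol(S)\le\Psi(u)$ whenever $\mu\ge\mu_*(u)$. Taking the supremum over $T\in\mathcal A_u$ yields $V_{\mathrm{ii}}(u)\le\Psi(u)$. Conditions (ii) and (iii) of $\mathcal A_u$ are not even needed for this cap.

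The only step needing care is this decoupling of Theorem~\ref{thm:B}'s caps from the body: one must check that no step used connectedness of $B'$ or the marked points beyond the tilt bound. Inspection shows that the chord and slab arguments are purely about $T$ and the corridor pieces. A minor subtlety is the measure-zero plane $\{y=1\}$ and closedness issues, which do not affect volumes. I therefore expect no real obstacle; the proposition is essentially a repackaging.
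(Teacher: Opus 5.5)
Your proposal is correct and follows the paper's proof essentially step for step: you use the same case split at the first-crossing event, you place the entry tube $T(\tau_c)$ in $\mathcal A_u$ via the touch point, the deep point and Lemma~\ref{lem:tiltsharp}, and you note that Theorem~\ref{thm:B}'s chord caps and combining step depend only on $\mu\ge\mu_*(u)$ and the shadow geometry of $\SX\cup\Stub(c)$. Your remark that conditions (ii) and (iii) play no role in the cap $V_{\mathrm{ii}}(u)\le\Psi(u)$ is also accurate, and it is consistent with the paper's own justification of that step.
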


\begin{proof}
Apply Lemma~\ref{lem:event} at height $c$ and write $B':=B_{\tau_c}$, so
$\vol(B')=\vol(B)$ and $B'\subseteq\SX\cup\Stub(c)$.  If
$B'\cap\{y\le1\}=\emptyset$ (case (i)), then $\vol(B)\le u$ as in
Theorem~\ref{thm:A}.  Otherwise (case (ii)) consider the tube
$T:=T(\tau_c)$: it contains $B'$ (Lemma~\ref{lem:tube}), hence contains the
touch point $q\in W_c$ (Lemma~\ref{lem:event}) and the deep point
$p^*\in W_1$ (Lemma~\ref{lem:deep}), and its axis satisfies
$|w\cdot\yhat|=\mu\ge\mu_*(u)$ (Lemma~\ref{lem:tiltsharp}).  So
$T\in\mathcal A_u$ and
$\vol(B)=\vol(B')\le\vol\bigl(T\cap(\SX\cup\Stub(c))\bigr)\le
V_{\mathrm{ii}}(u)$.  Finally, for any $T\in\mathcal A_u$ the two chord
estimates in the proof of Theorem~\ref{thm:B} apply verbatim to
$T\cap(\SX\cup\Stub(c))$ split along $\{y=1\}$ --- they use only
$\mu\ge\mu_*(u)$ and the shadow geometry of $\SX\cup\Stub(c)$ --- and the
combining step there gives
$\vol(T\cap(\SX\cup\Stub(c)))\le\Psi(u)$.
\end{proof}

Theorem~\ref{thm:B} is thus precisely the analytic estimate
$V_{\mathrm{ii}}(u)\le\Psi(u)$; the computation replaces the estimate by a
certified enclosure of the supremum itself.

\subsection*{The computation}

\emph{Parameterization.}  After harmless quotients --- the axis sign
$w\leftrightarrow-w$; the corridor's mirror symmetry $z\mapsto1-z$, which
fixes $\SX$, $\Stub(c)$ and both windows; and the cross-section's four-fold
roll symmetry --- a tube of $\mathcal A_u$ is specified by five parameters:
two spherical angles for the axis direction (polar angle
$\theta\le\arccos\mu_*(u)$ from $\yhat$ by (i); $\approx65.5^\circ$ at
$u=2.2$), a roll $\alpha\in[0,\pi/2]$, and the two coordinates $(a,b)$ of a
distinguished cross-section corner ray in the plane through the center of
$W_c$ perpendicular to the axis.  Condition (ii) makes the offset range
compact: some tube point lies in $W_c$, the corner ray passes within the
cross-section diameter $\sqrt2$ of it, and $W_c$ has half-diagonal
$\sqrt2/2$, so the corner ray passes within $3\sqrt2/2\approx2.12$ of the
window center, and $(a,b)$ lies in a disc of that radius.

\emph{Certified-outer branch-and-bound.}  The search branches on boxes of
(direction, roll) and sweeps the offset exhaustively per box.  Each box
receives an upper bound on
$\vol(T\cap(\SX\cup\Stub(c)))$ valid \emph{simultaneously for every tube in
the box}: the volume is written as an integral of line masses parallel to
the boxed axis directions (Fubini as in Lemma~\ref{lem:chord}); per-line
masses are bounded by interval-arithmetic-style slab clipping of the two
region boxes; the sweep over $(a,b)$ and $\alpha$ is dominated by a discrete
correlation with a dilated cross-section kernel; boxes provably violating
(ii) or (iii) for every contained tube are discarded (feasibility tested
with every discretization margin on the keep side); and coarse boxes fall
back on exact componentwise chord caps
($\mathrm{chord}\le\mathrm{extent}_i/|d_i|$, cf.\ Lemma~\ref{lem:chord}).
Every discretization, linearization and floating-point term is slackened in
the conservative (over-counting) direction, so the maximum of the leaf
bounds is, at \emph{every} stopping point of the search, a valid upper bound
for $V_{\mathrm{ii}}(u)$.  Runs at several $u$ produce
Table~\ref{tab:landscape}; each row, combined with
Proposition~\ref{prop:family} at its own $u$, is an independently valid
theorem for every $M>u$ (only $c<1+M$ enters, and the region
$\SX\cup\Stub(c)$ does not depend on $M$).

\begin{theorem}[computed upper bound; computer-assisted]\label{thm:C}
Let $M>2.2$ and let $(B,g)$ traverse $\K_M$ weakly.  Then
\[
\vol(B)\;\le\;3.4083 .
\]
Specifically, a $49{,}513$-leaf branch-and-bound cover of
$\mathcal A_{2.2}$ certifies $V_{\mathrm{ii}}(2.2)\le3.408293$, so
Proposition~\ref{prop:family} with $u=2.2$ gives
$\vol(B)\le\max\{2.2,\,3.408293\}<3.4083$, covering the corridor of
Problem~6.64 ($M=4$) and every $M>2.2$.
\end{theorem}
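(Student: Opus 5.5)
The proof is a reduction to Proposition~\ref{prop:family} at the single height $u=2.2$, followed by a certified computation of $V_{\mathrm{ii}}(2.2)$. Given $M>2.2$, the value $u=2.2$ lies in $(\sqrt2,M)$, so $c=3.2<y_t$ and Lemmas~\ref{lem:event}, \ref{lem:deep} and~\ref{lem:tiltsharp} all apply. Proposition~\ref{prop:family} then gives $\vol(B)\le\max\{2.2,V_{\mathrm{ii}}(2.2)\}$. Everything therefore reduces to the purely geometric, $M$-independent claim $V_{\mathrm{ii}}(2.2)\le 3.408293$, since $\max\{2.2,3.408293\}<3.4083$. Note that $\Psi(2.2)$ by itself is larger than $4.2862$, so the analytic caps of Theorem~\ref{thm:B} are not enough, and the computation is genuinely needed.

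For the computation I would first check that the quotients used in the parameterization are sound, i.e.\ that they preserve $\vol(T\cap(\SX\cup\Stub(c)))$ and the membership conditions (i)--(iii):
\begin{itemize}
\item axis reversal leaves the tube unchanged as a set;
\item the reflection $z\mapsto1-z$ maps $\SX$, $\Stub(c)$, $W_1$ and $W_c$ to themselves;
\item a roll by $\pi/2$ maps the square cross-section to itself.
\end{itemize}
The family then becomes a compact box in $(\theta,\phi,\alpha,a,b)$ with $\theta\le\arccos\mu_*(2.2)$ and $(a,b)$ in the disc of radius $3\sqrt2/2$. I would also check that this parameterization is \emph{onto} $\mathcal A_{2.2}$; it need not be injective. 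For each parameter box I need an upper bound on the volume that holds uniformly over the box. I would compute it by Fubini along a fixed direction $d$ close to the boxed axis directions: the volume is the integral over a planar shadow of the chord lengths of $T\cap(\SX\cup\Stub(c))$. Each chord is bounded by intersecting the interval enclosures of the chord of the tube with the slabs of the two boxes $\SX$ and $\Stub(c)$, and the offset and roll sweeps are dominated by a correlation with a cross-section kernel dilated by the box radius. Boxes whose tubes all miss $W_1$ or all miss $W_c$ are discarded, using a separation test whose margins err on the side of keeping the box.

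The key soundness statement is a domination lemma. Every rounding step (linearizing the direction within a box, discretizing the shadow plane, enlarging the kernel, and floating-point error) must only enlarge the computed bound. Then, at every stage, the maximum of the leaf bounds is at least $\sup_{T\in\mathcal A_{2.2}}$ of the volume. I would argue each of these enlargements separately: a Lipschitz bound in the parameters for the tube's position, the cell diameter times the kernel dilation, and directed rounding or an explicit $\epsilon$-slack for the arithmetic. As a fallback for coarse boxes I would use the componentwise chord cap $\mathrm{extent}_i/|d_i|$, which is exact. Running the search until the largest leaf bound drops below $3.408293$, with $49{,}513$ leaves, finishes the proof.

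The main obstacle will be this uniform-over-box domination. Making the direction-dependent Fubini enclosure provably valid for every axis in a box, rather than only for the box center, is the hard part, and it must be done without double precision silently undercounting. I would first try directed rounding together with a global additive slack proportional to the number of cells. The result would then be cross-checked against independent Monte Carlo evaluation of $\vol(T\cap(\SX\cup\Stub(c)))$ at the maximizing leaves, and against the analytic cap $\Psi(2.2)$ as a sanity bound.
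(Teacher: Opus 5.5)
Your proposal follows the paper's route: Proposition~\ref{prop:family} at $u=2.2$ reduces the theorem to the $M$-independent claim $V_{\mathrm{ii}}(2.2)\le 3.408293$. That claim is certified by the same outer branch-and-bound over the five-parameter quotient family (quotients by axis sign, $z\mapsto 1-z$ and roll), using Fubini line masses along the boxed directions, slab clipping, a dilated-kernel correlation for the offset and roll sweeps, conservative pruning against (ii)--(iii), and $\mathrm{extent}_i/|d_i|$ chord fallbacks. The only divergence is that you would use directed rounding, whereas the paper runs in plain double precision with fixed per-box slacks and defers interval arithmetic to future work; your version would therefore be slightly more rigorous.
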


\begin{remark}[nature and rigor status of Theorem~\ref{thm:C}]
\label{rem:thmCnature}
Theorem~\ref{thm:C} is computer-assisted and its rigor status differs from
Theorems~\ref{thm:A} and~\ref{thm:B}: the computation runs in IEEE double
precision, not in interval arithmetic.  Every comparison, dilation and
per-box bound carries an explicit slack in the conservative direction
($10^{-9}$ relative and $10^{-6}$ absolute per box), the geometry enters
only through outer enclosures, and the certified value sits $\sim0.9$ below
the independent analytic ceiling $\Psi(2.2)=4.96\ldots$, so a floating-point
failure would have to defeat the slacks systematically rather than
marginally.  A re-run under full interval (or rational) arithmetic is the
natural hardening step and is planned as future work; until then
Theorems~\ref{thm:A} and~\ref{thm:B} stand as the fully hand-checkable
ladder beneath Theorem~\ref{thm:C}.
\end{remark}

\begin{table}[ht]
\centering
\caption{The computed landscape $u\mapsto V_{\mathrm{ii}}(u)$ (certified
upper enclosures, rounded up).  Each row is an independently valid bound
$\Vstar(\K_M)\le\max\{u,V_{\mathrm{ii}}(u)\}=V_{\mathrm{ii}}(u)$ for
\emph{every} $M>u$ (Proposition~\ref{prop:family}); the headline
Theorem~\ref{thm:C} is the $u=2.2$ row.  ``Budget-stopped'' rows are valid
but improvable: the frontier maximum at any stopping point is certified, and
the archived frontiers are resumable (Appendix~\ref{app:repro}).
$g$ is the evaluator's cross-plane resolution.}
\label{tab:landscape}
\small
\setlength{\tabcolsep}{4.5pt}
\begin{tabular}{ccccl}
\toprule
$u$ & certified $V_{\mathrm{ii}}(u)$ & bound for all $M>u$ & resolution & status \\
\midrule
2.0 & 4.0706 & 4.0706 & $g=0.04$ & converged at this res. \\
2.2 & \textbf{3.4083} & \textbf{3.4083} & $g=0.02$ (+ $g=0.01$ polish) & budget-stopped \\
2.4 & 3.5236 & 3.5236 & $g=0.02$ (+ polish) & budget-stopped \\
2.6 & 3.6942 & 3.6942 & $g=0.02$ (+ polish) & budget-stopped \\
3.0 & 4.2492 & 4.2492 & $g=0.04$ & budget-stopped \\
\bottomrule
\end{tabular}
\end{table}

\begin{remark}[verification gates]\label{rem:gates}
The credibility of a computed certificate rests on its test battery.  Four
gates (plus a Monte-Carlo cross-check) were run before any search and re-run
after every change to the evaluator; all pass, and their full output is
archived (Appendix~\ref{app:repro}).
\emph{G1, witness inclusion}: on the certified record traversal of
\cite{OurLB}, at the first crossings of $c=2.5/3.0/3.5$ the record's actual
tube satisfies every constraint of $\mathcal A_u$ with margin (tilt exceeds
the floor by $+0.61/+0.56/+0.48$; both window penetrations are strictly
positive with no tolerance consumed; offsets at radius $\le0.72$ of the
allowed $2.12$), and the evaluator's single-box outer value at that pose
($2.90/2.97/3.70$) dominates both the certified record volume $1.8558404$
and the pose's measured carve mass $1.867$ --- the family provably contains
the one known high-volume traversal, and the evaluator over-counts it, as it
must.
\emph{G2, exactness units}: on two analytically solvable configurations
(axis-aligned; $45^\circ$ in-plane) the computed outer bound dominates the
exact volume and converges to it at the predicted $O(g)$ rate with recorded
constants; the fast correlation matches a direct reference evaluation to
$6.3\cdot10^{-13}$; an independent Monte-Carlo cross-check on $40$ random
tubes found no case with certified value below the Monte-Carlo estimate.
\emph{G3, monotonicity}: box shrinking and grid refinement never increased a
bound ($12$ random boxes, $72$ child comparisons, zero violations).
\emph{G4, restartability}: killing and resuming the search reproduces the
frontier bit-exactly (hash equality).  A failed gate would have stopped the
computation.
\end{remark}

\begin{remark}[mechanism consistency]\label{rem:mechanism}
The computed landscape of Table~\ref{tab:landscape} has an interior minimum
in $u$, for the same reason as Theorem~\ref{thm:B}'s optimum at
$u_B\approx2.60$: raising $u$ raises the stub floor, while lowering $u$
weakens the forced-tilt floor $\mu_*(u)$, so nearly horizontal tubes may
rake ever longer stretches of the entry arm.  Pushing below $u=2.2$ indeed
makes the computed bound \emph{worse}: the $u=2.0$ run converged at its
resolution to the larger value $4.0706$.  The computed minimum sits at a
smaller $u$ than $u_B$ because the certified enclosure of
$V_{\mathrm{ii}}(u)$ is sharper than the analytic cap $\Psi(u)$
(at $u=2.2$: $3.41$ versus $4.96$), which shifts the trade-off toward the
lower stub floor.
\end{remark}

\section{Further structural facts}\label{sec:structure}

The next results are not needed for Theorems~\ref{thm:A} and~\ref{thm:B} but
constrain the problem further and, we hope, will be useful for subsequent
work.  The first quantifies the volume of the intersection of two tubes and
turns the current record lower bound into an unconditional rotation
statement.

\begin{lemma}[two-tube volume bound]\label{lem:V2}
Let $T,T'$ be tubes with unit-square cross-sections whose axis lines make
angle $\theta\in(0,\pi/2]$.  Then
\[
\vol(T\cap T')\;\le\;\frac{\sqrt2}{\sin\theta}.
\]
\end{lemma}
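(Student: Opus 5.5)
The plan is to reduce to a Fubini argument along the axis of one tube, with the chord bound of Lemma~\ref{lem:chord} controlling each fiber through the other tube. Let $w$ and $w'$ be the unit axis directions of $T$ and $T'$, so $|w\cdot w'|=\cos\theta$. Write $T=\{\rho+s\,w+\xi:\ s\in\R,\ \xi\in Q\}$ as in Definition~\ref{def:tubedef}. The map $(s,\xi)\mapsto\rho+sw+\xi$ is an isometry of $\R\times Q$ onto $T$, so
\[
\vol(T\cap T')=\int_Q \bigl|\{s:\ \rho+sw+\xi\in T'\}\bigr|\,d\xi .
\]
Here $d\xi$ is area measure on the unit square $Q$, which has area $1$.

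For each fixed $\xi\in Q$, the set $\{\rho+sw+\xi : s\in\R\}$ is a line parallel to $w$. Since $\theta>0$ we have $w\neq\pm w'$. Lemma~\ref{lem:chord}, applied to the tube $T'$ with $v=w$, shows that this line meets $T'$ in a segment of length at most
\[
\frac{\sqrt2}{\sqrt{1-(w\cdot w')^2}}=\frac{\sqrt2}{\sin\theta}.
\]
The chord lemma already uses convexity of $T'$ to guarantee that the intersection is a single segment. The line is parametrized by arclength $s$ because $|w|=1$, so the one-dimensional measure of the $s$-set equals the chord length.

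Integrating this uniform fiber bound over $Q$ gives $\vol(T\cap T')\le\operatorname{area}(Q)\cdot\sqrt2/\sin\theta=\sqrt2/\sin\theta$. Measurability is not an issue, since $T\cap T'$ is closed.

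There is no genuine obstacle here. The one point to check is that Lemma~\ref{lem:chord} gives the chord bound for an arbitrary line direction $v\neq\pm w'$, and not only for the coordinate directions used in its volume corollary. The first half of that lemma is stated in this generality, so it applies directly. The only other thing to confirm is that the angle between the axis lines satisfies $\sin\theta=\sqrt{1-(w\cdot w')^2}$ whichever sign is chosen for the axes, which holds since $\theta\in(0,\pi/2]$.
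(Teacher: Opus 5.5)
Your proof is correct, and it is shorter than the paper's. The paper works in oblique coordinates $p=\alpha e+\beta e'+\gamma\bar w$ adapted to both tubes at once, with Jacobian $\sin\theta$. At fixed $\gamma$, the slice of $T\cap T'$ is a product of two intervals: a $\beta$-interval of length $\ell(\gamma)/\sin\theta$ cut out by $T$, and an $\alpha$-interval of length $\ell'(\gamma)/\sin\theta$ cut out by $T'$. This gives the exact identity $\vol(T\cap T')=\frac{1}{\sin\theta}\int_\R\ell(\gamma)\,\ell'(\gamma)\,d\gamma$, which the paper then bounds using $\ell\le\sqrt2$ and $\int\ell'\,d\gamma=1$.

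You instead use only the isometric coordinates of $T$, and bound every fiber along $w$ uniformly by Lemma~\ref{lem:chord} applied to $T'$ with $v=w$. In the paper's notation this is the same estimate with the two tubes' roles exchanged: you take the supremum of the chords of $T'$'s square and integrate over the cross-section of $T$. Your version needs no oblique Jacobian and no second chord family, and it reuses a lemma already proved.

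What the paper's route buys is the exact product formula. It isolates the single lossy step and is the natural starting point for the conjectured sharp constant $1/\sin\theta$ of Remark~\ref{rem:witness}, which would follow from $\int\ell\,\ell'\,d\gamma\le1$. Your uniform fiber bound discards that information at the outset. Integrating your fibers exactly, rather than bounding them uniformly, would recover the same identity.
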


\begin{proof}
Let $e,e'$ be unit axis directions of $T,T'$ with $\angle(e,e')=\theta$, and
let $\bar w:=\frac{e\times e'}{|e\times e'|}$, so $\bar w\perp e,e'$ and
$|\det(e,e',\bar w)|=\sin\theta$.  Use oblique coordinates
$p=\alpha e+\beta e'+\gamma\bar w$, so $dp=\sin\theta\,d\alpha\,d\beta\,
d\gamma$.  Membership $p\in T$ depends only on the projection of $p$ onto
$e^\perp$, namely $\beta e'_\perp+\gamma\bar w+\mathrm{const}$ with
$e'_\perp:=e'-(e\cdot e')e$, $|e'_\perp|=\sin\theta$; at fixed $\gamma$ the
admissible $\beta$ fill an interval of length $\ell(\gamma)/\sin\theta$,
where $\ell(\gamma)\le\sqrt2$ is a chord of the unit square of $T$ along the
fixed direction $e'_\perp/|e'_\perp|$ at offset $\gamma$ (note
$\bar w\in e^\perp$ and $\bar w\perp e'_\perp$).  Symmetrically, membership
$p\in T'$ constrains $\alpha$ to an interval of length
$\ell'(\gamma)/\sin\theta$, where $\gamma\mapsto\ell'(\gamma)$ are parallel
chords of the unit square of $T'$ at perpendicular offset $\gamma$; by
Cavalieri, $\int_\R\ell'(\gamma)\,d\gamma=\operatorname{area}=1$.  Hence
\[
\vol(T\cap T')=\sin\theta\int_\R
\frac{\ell(\gamma)}{\sin\theta}\cdot\frac{\ell'(\gamma)}{\sin\theta}
\,d\gamma
\le\frac{\sup_\gamma\ell(\gamma)}{\sin\theta}\int_\R\ell'(\gamma)\,d\gamma
\le\frac{\sqrt2}{\sin\theta}. \qedhere
\]
\end{proof}

\begin{corollary}[rotation budget]\label{cor:RB}
Let $(B,g)$ traverse $\K_M$ weakly with $\vol(B)\ge v>\sqrt2$, and let
$a:=R_0^{-1}\xhat$, $b:=R_1^{-1}\zhat$ be the body-frame axes of the entry
and exit tubes (Lemma~\ref{lem:tube}).  Then the acute angle
$\theta_L$ between the axis \emph{lines} spanned by $a$ and $b$ satisfies
$\sin\theta_L\le\sqrt2/v$, and the net rotation $R_1R_0^{-1}$ has rotation
angle at least $90^\circ-\arcsin(\sqrt2/v)$.  In particular, at the current
certified record volume $v=1.8558404$: every weakly traversing body of at
least this volume satisfies $\theta_L\le49.65^\circ$ and must rotate in total
by more than $40.35^\circ$.
\end{corollary}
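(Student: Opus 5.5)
The plan is to look at the body at a single time, namely $t=0$, transported back to the body frame. By Lemma~\ref{lem:tube}, $B\subseteq g_0^{-1}(\R\times[0,1]^2)$, which is a tube with axis direction $R_0^{-1}\xhat=a$. Likewise $B\subseteq g_1^{-1}(\SZ)\subseteq g_1^{-1}\bigl([0,1]\times[y_t,y_t+1]\times\R\bigr)$, a tube with axis direction $R_1^{-1}\zhat=b$. Both are rigid images of $\R\times[0,1]^2$, so both have unit-square cross-sections, and $B$ lies in their intersection.

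First I would dispose of the degenerate case $\theta_L=0$, where the claimed inequality $\sin\theta_L\le\sqrt2/v$ holds trivially. If $\theta_L\in(0,\pi/2]$, Lemma~\ref{lem:V2} gives
\[
v\le\vol(B)\le\frac{\sqrt2}{\sin\theta_L},
\]
which is exactly $\sin\theta_L\le\sqrt2/v$. Because $v>\sqrt2$, the right-hand side is below $1$, so $\theta_L\le\arcsin(\sqrt2/v)<90^\circ$.

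For the rotation angle, let $R:=R_1R_0^{-1}$. Then $R_1a=R\,\xhat$ and $R_1b=\zhat$. Since $R_1$ is an isometry, the angle between the lines of $a$ and $b$ equals the angle between the lines of $R\xhat$ and $\zhat$, so $|R\xhat\cdot\zhat|=\cos\theta_L$. The standard fact is that a rotation by angle $\phi$ moves every unit vector $e$ through an angle at most $\phi$, that is, $\angle(e,Re)\le\phi$. Because $\xhat\perp\zhat$, the vector $\xhat$ makes angle $90^\circ$ with the line of $\zhat$, while $R\xhat$ makes angle $\theta_L$ with that line. The angular distance from a point to a line-pair on the sphere is $1$-Lipschitz, so $\angle(\xhat,R\xhat)\ge 90^\circ-\theta_L$. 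This yields $\phi\ge 90^\circ-\theta_L\ge 90^\circ-\arcsin(\sqrt2/v)$.

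The fact that a rotation moves vectors through at most its angle follows from decomposing $e$ into an axial component and a perpendicular component. The displacement $|Re-e|=2\sin(\phi/2)\,|e_\perp|\le 2\sin(\phi/2)$, and this is the chord length corresponding to angle $\phi$.

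The numerical part is direct evaluation. For $v=1.8558404$ we get $\sqrt2/v\approx0.76204$, and $\arcsin(0.76204)\approx49.65^\circ$. The rotation angle is therefore at least $90^\circ-49.65^\circ\approx40.35^\circ$. To get the strict inequality claimed in the statement, I would round the arcsine carefully so that it sits slightly below the printed $49.65^\circ$.

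There is no real obstacle here. The only point needing care is the passage from the axis-line angle $\theta_L$ to the rotation angle of $R_1R_0^{-1}$, which the Lipschitz/displacement argument above handles.
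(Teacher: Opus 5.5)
Your proof is correct and follows the paper's argument: Lemma~\ref{lem:V2} is applied to the two body-frame tubes with axes $a$ and $b$, and then $\angle(\xhat,R\xhat)\ge 90^\circ-\theta_L$ (your Lipschitz argument is the paper's spherical triangle inequality) is combined with the fact that a rotation displaces no unit vector by more than its angle. The one loose end is the numerics: the paper records $\arcsin(\sqrt2/v)=49.6438\ldots^\circ$, which gives both $\theta_L\le49.65^\circ$ and the strict total rotation of more than $40.35^\circ$ with room to spare, so your rounding step should state this value explicitly.
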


\begin{proof}
$B$ lies in both body-frame tubes $g_0^{-1}(\R\times[0,1]^2)$ and
$g_1^{-1}([0,1]\times[y_t,y_t+1]\times\R)$, with axes $a$ and $b$.  If
$\sin\theta_L>\sqrt2/v$ then Lemma~\ref{lem:V2} gives
$\vol(B)\le\sqrt2/\sin\theta_L<v$, a contradiction; so
$\sin\theta_L\le\sqrt2/v$ (this also covers $\theta_L=0$ trivially).  The
net rotation maps $\xhat=R_0a\mapsto R_1a$, and
$\angle(R_1a,\zhat)=\angle(R_1a,R_1b)=\angle(a,b)=:\theta_v\in[0^\circ,
180^\circ]$, where $\theta_L=\min(\theta_v,180^\circ-\theta_v)$.  By the
spherical triangle inequality the displacement of $\xhat$ is
$\angle(\xhat,R_1a)\ge|\,90^\circ-\theta_v\,|=90^\circ-\theta_L$, and a
rotation moves no unit vector by more than its rotation angle; hence the
rotation angle is at least $90^\circ-\theta_L\ge90^\circ-\arcsin(\sqrt2/v)$.
For $v=1.8558404$: $\arcsin(\sqrt2/v)=49.6438\ldots^\circ\le49.65^\circ$ and
$90^\circ-49.6439^\circ=40.3561\ldots^\circ>40.35^\circ$.
\end{proof}

\begin{remark}[why the events are indispensable]\label{rem:K1}
The two tube containments of Lemma~\ref{lem:tube} \emph{alone} bound
nothing: they leave the relative pose of the two tubes in the body frame
unconstrained, and if the axes coincide the intersection is an infinite tube
of infinite volume.  This degeneracy is realized, not hypothetical: on the
record traversal of \cite{OurLB} the body-frame entry and exit axes agree to
$1.2\cdot10^{-6}$ degrees (the block's long axis remains the traveling axis
through both turns).  Likewise, without the touch clauses of
Lemma~\ref{lem:event}, all containments are satisfied by a body parked deep
in the entry arm, and no volume bound can follow.  Any bound must therefore
couple the tube poses to forced events, as Theorems~\ref{thm:A}
and~\ref{thm:B} do; Corollary~\ref{cor:RB} escapes the degeneracy differently,
by \emph{assuming} a volume and concluding a rotation.
\end{remark}

\begin{lemma}[no long segments]\label{lem:seg}
Let $M>1$ and let $(B,g)$ traverse $\K_M$ weakly.  Then every straight
segment contained in $B$ has length at most $3$.
\end{lemma}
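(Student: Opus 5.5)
The plan is to run the first-crossing argument of Lemma~\ref{lem:event} on the segment itself rather than on $B$, with height $c=2$. Let $S\subseteq B$ be a closed straight segment and put $S_t:=g_t(S)$. The proof of Lemma~\ref{lem:event} uses only three facts: the moving set is compact and connected, it starts in $\SX\subseteq\{y\le1\}$, and it ends in $\SZ\subseteq\{y\ge y_t\}$. It never uses $\vol>0$. Since $S_t\subseteq B_t\subseteq\K$, $S_0\subseteq\SX$ and $S_1\subseteq\SZ$, the same argument applies verbatim to $S$ (Lemma~\ref{lem:cont} likewise applies to the compact set $S$). The hypothesis $M>1$ gives $c=2<y_t=1+M$, so \eqref{eq:Kfacts} applies. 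We obtain a time $\tau$ with $S_\tau\subseteq\SX\cup\Stub(2)$ such that $S_\tau$ contains a point $q$ with $q_y=2$ and $q_x,q_z\in[0,1]$.

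Next we bound the coordinate extents of the segment $S_\tau$. Since $S_\tau\subseteq\{0\le z\le1\}$, the $z$-extent is at most $1$. Since $S_\tau\subseteq\{0\le y\le 2\}$, the $y$-extent is at most $2$. Everything then reduces to the claim that the $x$-extent is at most $2$, because then
\[
|S_\tau|\le\sqrt{2^2+2^2+1^2}=3.
\]
All points have $x\le1$. If $S_\tau\subseteq\{x\ge0\}$, the $x$-extent is at most $1$; this covers the case $S_\tau\subseteq\Stub(2)$, where the length is at most $\sqrt6<3$.

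The remaining case, where $S_\tau$ reaches $x<0$, is the only real step. Parametrize $S_\tau$ starting from $q$ toward a point $r$ with $r_x<0$. Along the segment, $x$ and $y$ are affine functions. Every point with $x<0$ lies in $\SX$, so it has $y\le1$. By continuity, the point $p_0$ of the subsegment $[q,r]$ with $x=0$ therefore has $y\le1$. Between $q$ and $p_0$, $x$ drops by at most $1$ while $y$ drops by at least $c-1=1$. Hence the projected slope satisfies $s:=|\Delta y|/|\Delta x|\ge1$ along the whole line.

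On the part of the line with $x<0$, all $y$-values lie in $[0,1]$. The $y$-range there is at most $1$, so its $x$-range is at most $1/s\le1$. We also need the segment to extend to $x<0$ on one side of $x=0$ only. This holds because it must stay in $x\le1$ and $x=0$ is crossed once. The total $x$-extent is thus at most $1+1=2$, which is the claim.

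The main obstacle is the bookkeeping in this last case: checking that the monotonicity of $x$ and $y$ along the line makes the two pieces (the part with $x\in[0,1]$ and the part with $x<0$) add up as stated. One must also check that the slope bound $s\ge1$ obtained on $[q,p_0]$ controls the part beyond $p_0$, which it does because the slope is constant along a line. The degenerate directions need a one-line treatment. If $\Delta x=0$, the segment never reaches $x<0$. If $\Delta y=0$, the point $q$ forces $y\equiv2$ along the segment, so again it never reaches $x<0$.
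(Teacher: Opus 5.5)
Your proposal is correct and follows essentially the paper's route: a first crossing at height $2$ for the segment itself (the paper uses the maximum of the two endpoint heights, which is the same function), $z$-extent at most $1$, and a slope argument giving $x$- and $y$-extents at most $2$, hence length at most $\sqrt{4+4+1}=3$. The only difference is cosmetic. You cut at the level $x=0$, whose crossing point has $y\le1$ and forces $|\Delta y/\Delta x|\ge1$, whereas the paper cuts at $y=1$, whose crossing point has $x\in[0,1]$ and forces $|\Delta x/\Delta y|\le1$; also, your ``one side'' claim follows simply from $x$ being affine along the segment, not from the constraint $x\le1$.
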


\begin{proof}
Let $\sigma\subseteq B$ be a segment with endpoints $p,q$.  The function
$t\mapsto u(t):=\max\{g_t(p)\cdot\yhat,\ g_t(q)\cdot\yhat\}$ is continuous
(maximum of two continuous functions), $u(0)\le1$ ($B_0\subseteq\SX$) and
$u(1)\ge y_t=1+M>2$ ($B_1\subseteq\SZ$).  As in Lemma~\ref{lem:event} there
is a first time $\tau$ with $u(\tau)=2$.  Write $\sigma_\tau:=g_\tau(\sigma)$,
a segment in $B_\tau\subseteq\K$ whose endpoint $y$-values are $\le2$ with
maximum exactly $2$; since $y$ is affine along a segment,
$\sigma_\tau\subseteq\K\cap\{y\le2\}=\SX\cup\Stub(2)\subseteq\{0\le z\le1\}$
by \eqref{eq:Kfacts}.  In particular the $z$-extent of $\sigma_\tau$ is at
most $1$.

Project out $z$: $\pi(x,y,z):=(x,y)$.  Then $\pi(\sigma_\tau)$ is a segment
in $\pi(\SX\cup\Stub(2))=\bigl((-\infty,0]\times[0,1]\bigr)\cup
\bigl([0,1]\times[0,2]\bigr)=:\mathrm{Sh}$, with top endpoint
$\tilde q$ at height $y=2$ and $\tilde q_x\in[0,1]$ (a point of height $2>1$
lies in $\Stub(2)$).  Points of $\pi(\sigma_\tau)$ with $y\in(1,2]$ lie in
the strip $[0,1]\times(1,2]$ (the left part of $\mathrm{Sh}$ has $y\le1$).
Let $\tilde p$ be the other endpoint.

If $\tilde p_y>1$, the whole segment lies in $[0,1]\times(1,2]$ and has
length at most $\sqrt{1+1}=\sqrt2$; with the $z$-extent,
$|\sigma|=|\sigma_\tau|\le\sqrt{2+1}<3$.

If $\tilde p_y\le1$, the segment crosses the level $y=1$ at a point $r$,
which is a limit of segment points with $y>1$, hence $r_x\in[0,1]$.  On the
subsegment from $r$ to $\tilde q$: $\Delta y=1$ and $|\Delta x|\le1$, so the
segment's direction satisfies $|dx/dy|\le1$.  On the subsegment from
$\tilde p$ to $r$: $\Delta y=1-\tilde p_y\le1$ (as $y\ge0$ on
$\mathrm{Sh}$), and the direction is the same, so $|\Delta x|\le1$ there
too.  Altogether $|\Delta x|\le2$ and $\Delta y\le2$ over
$\pi(\sigma_\tau)$, so $|\pi(\sigma_\tau)|\le2\sqrt2$, and
$|\sigma|=|\sigma_\tau|\le\sqrt{(2\sqrt2)^2+1^2}=3$.
\end{proof}

\begin{corollary}\label{cor:convex}
For $M>1$, every \emph{convex} body that weakly traverses $\K_M$ has
diameter at most $3$ and volume at most $3$.
\end{corollary}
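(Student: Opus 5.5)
The diameter claim follows directly from Lemma~\ref{lem:seg}. Let $B$ be a convex body that weakly traverses $\K_M$ with $M>1$, and take any $p,q\in B$. By convexity the whole segment $[p,q]$ lies in $B$, so Lemma~\ref{lem:seg} gives $|p-q|\le3$. Since $B$ is compact, its diameter is attained by some pair of points, and therefore $\operatorname{diam}B\le3$.

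For the volume I would use the unit-width cross-section constraint. By Lemma~\ref{lem:tube}, $B_0=g_0(B)$ lies in $\SX$, so $B_0\subseteq\R\times[0,1]^2$. The diameter bound applies to $B_0$ as well, because $g_0$ is an isometry. Hence the projection of $B_0$ onto the $x$-axis is an interval $[a,a+L]$ with $L\le3$. This gives
\[
B_0\subseteq[a,a+L]\times[0,1]\times[0,1].
\]
Applying Fubini along $\xhat$ over the $(y,z)$-unit square, each fiber has length at most $L$. Therefore
\[
\vol(B)=\vol(B_0)\le L\cdot1\le3.
\]

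Convexity is only needed in the first step, to guarantee that chords of $B$ are segments contained in $B$; the volume step just combines the diameter bound with the cross-section. I expect no real obstacle here. The one point to check is that Lemma~\ref{lem:seg} is stated for segments contained in $B$ in its body frame, and diameter is invariant under the rigid motions $g_t$, so nothing is lost in passing from $B$ to $B_0$.
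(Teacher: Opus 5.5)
Your proof is correct and follows the paper's argument: convexity plus Lemma~\ref{lem:seg} bounds the diameter, and Fubini along the entry-tube axis (the $x$-axis in the pose $B_0=g_0(B)\subseteq\SX$, which is directly condition~(ii) of Definition~\ref{def:traversal}) bounds the volume by the axial extent, at most $3$, times the unit cross-section area. The paper does the same thing phrased with the abstract tube of Lemma~\ref{lem:tube}.
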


\begin{proof}
In a convex body the segment between any two points is contained in the
body, so Lemma~\ref{lem:seg} bounds the diameter by $3$.  By
Lemma~\ref{lem:tube}, $B$ lies in a tube of unit-square cross-section; its
extent along the tube axis is at most its diameter, so
$\vol(B)\le3\cdot1=3$ by Fubini along the axis.
\end{proof}

\begin{remark}
The constant $3$ in Lemma~\ref{lem:seg} is what this argument yields:
optimizing the intermediate height $c$ in its proof leads to
$\bigl(1+\tfrac1{c-1}\bigr)^2+c^2+1$, stationary exactly at $c=2$.  We do
not claim sharpness, although $2\sqrt2$ is the classical planar
segment-passage threshold and full unit $z$-tilt through the corner is
plausible.  Note the contrast: the \emph{diameter} of a general (non-convex)
traversing body is \emph{not} bounded by $3$ --- the record body of
\cite{OurLB} has diameter $\approx3.34$; its far-apart point pairs are
joined only through the bent bulk, in accordance with Lemma~\ref{lem:seg}.
\end{remark}

\section{Discussion}\label{sec:discussion}

\subsection*{Machine verification of the constants and lemmas}

\begin{remark}[numerical stress-testing]\label{rem:witness}
Every lemma and constant above was checked numerically, in two independent
ways, before being recorded.  (1) \emph{Synthetic}: the chord and slab
estimates and the two-tube bound were tested by exact-chord quadrature and
Monte-Carlo over random tube configurations (observed maxima: $0.67$,
$0.43$, and $0.71$ of the respective proven bounds --- the last consistent
with the conjecture that the sharp two-tube constant is $1/\sin\theta$
rather than $\sqrt2/\sin\theta$, which we do not use); the extremal-point
analysis inside Lemma~\ref{lem:tiltsharp} was verified by direct maximization
over a fine grid of the constraint region, and all quoted roots, enclosures
and evaluations ($u_B$, $\Psi(u_B)$, $f(3.5217)$) were re-derived by
independent optimization.  (2) \emph{Against the record}: on the certified
$512$-pose record traversal of \cite{OurLB}, the forced-tilt inequalities of
Lemmas~\ref{lem:tilt} and~\ref{lem:tiltsharp} hold at every tested event
height with margin (measured tilts $0.68$--$1.00$ against requirements
$0.06$--$0.64$), and the chord inequalities of Lemma~\ref{lem:chord} hold at
\emph{all} $512$ poses of the path with minimum margin $\approx0.7$ in
volume.  A failed check would have refuted the corresponding lemma; none
failed.  The scripts and their JSON outputs are archived with the lower-bound
artifacts.  Theorem~\ref{thm:C}'s computation carries its own, stricter
battery (Remark~\ref{rem:gates} and Appendix~\ref{app:repro}).
\end{remark}

\subsection*{Slack inventory}

The constants are certainly far from optimal; the value of
Theorems~\ref{thm:A}--\ref{thm:C} is that the set
$\{\Vstar(K_M)\}_{M}$ is now known to be bounded, uniformly in the
``far apart'' parameter.  The slack we see, in decreasing order of expected
yield: (1) the argument uses only \emph{one} event and only the \emph{entry}
tube --- coupling the exit tube (Lemma~\ref{lem:tube}) and the trailing
(minimum-height) events at the same time, or events at two heights, should
cut both caps; (2) the computed certificate of Section~\ref{sec:certC} is
not exhausted: its archived search frontiers are resumable
(Appendix~\ref{app:repro}), its current maximum is limited by linearization
drift rather than by the tube family itself, and uncertified interior
estimates suggest that further search on the same checkpoints would certify
$\approx3.1$--$3.3$; connectivity refinements (the body lies in one
component meeting both windows) and per-box re-anchoring are further levers
there; (3) under strong traversal the analogous climb events in $z$
force a second axis component $|w\cdot\zhat|$, which strengthens every chord
denominator in Theorem~\ref{thm:B} and every direction cap in
Section~\ref{sec:certC}; (4) proving the conjectured sharp constant
$1/\sin\theta$ in Lemma~\ref{lem:V2} would improve
Corollary~\ref{cor:RB} but not the main theorems; (5) hardening
Theorem~\ref{thm:C} with interval arithmetic
(Remark~\ref{rem:thmCnature}).  On the lower-bound side,
the gap in \eqref{eq:bracket} is expected to close mostly from above.

\subsection*{Definitional forks and scope}

Upper bounds here are proved under \emph{weak} traversal, hence hold under
every stronger reading (Remark~\ref{rem:wlog}).  One caution for future
sharpening: climb events in $z$ (the analogue of Lemma~\ref{lem:event} for
the second corner) are \emph{not} forced under weak traversal --- a body may
finish inside the second corner cube without ever climbing --- so any lemma
using them must assume strong traversal.  The bounds are uniform in
$M>2.2$ (Theorem~\ref{thm:C}), $M\ge2.61$ (Theorem~\ref{thm:B}), and
$M\ge3.53$ (Theorem~\ref{thm:A}); for smaller $M$, the rows of
Table~\ref{tab:landscape} reach down to all $M>2.0$ (bound $4.0706$), below
that Remarks~\ref{rem:optA} and~\ref{rem:optB} still give (weaker,
blowing-up) bounds on $(\sqrt2,2]$, and for $M\le\sqrt2$ the method gives
nothing --- the mechanism needs stub room above height $1+\sqrt2$.

\subsection*{Rigor status of the lower bounds}

\begin{remark}[evaluator caveat]\label{rem:caveat}
The certified lower bounds of Table~\ref{tab:ladder} are produced by the
evaluation pipeline of \cite{GGTW25}, and one step of that pipeline deserves
a precise caveat: the pipeline's points-to-cubes step (a
$4\times4\times4$-sampled cube counted as inside the corridor) is asserted,
not proven, in the source notebook --- near $K$'s reflex edges a rotated cube
admits $\Theta(h)$-deep excursions between in-corridor sample points.
Point-survival soundness is proved outright in
Appendix~\ref{app:soundness} (axis-convexity of $K$ and $8$-corner exactness
for axis-aligned boxes).  This affects the \emph{absolute} rigor of all
published pipeline numbers equally (the original $1.81$ of \cite{GGTW25}
included), so comparative claims --- the entire ladder of
Table~\ref{tab:ladder} --- are untouched; and nothing in the upper-bound
results of this note rests on the rule.  The step is upgradeable: proving it
(the pipeline's tolerances may absorb the excursions) or quantifying an
$\varepsilon$-correction would upgrade every lower bound's rigor.
\end{remark}

\appendix

\section{Point-level soundness of the certified lower bounds}
\label{app:soundness}

The lower bounds in Table~\ref{tab:ladder} are computed by a pipeline that
checks a moving discretized body against the corridor at finitely many poses:
translation steps are checked pointwise at their endpoints, and rotation
steps are checked by enclosing each point's circular arc in an axis-aligned
bounding box (with exact per-coordinate arc extrema) and testing the box's
$8$ corners.  Two elementary properties of this particular corridor make
those finite checks \emph{exact} at the level of points.

\begin{lemma}[axis-convexity]\label{lem:axisconv}
The intersection of $\K_M$ with every line parallel to a coordinate axis is
a (possibly empty or unbounded) interval.
\end{lemma}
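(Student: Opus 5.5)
The plan is to fix a line $\ell$ parallel to a coordinate axis $e_k$ and write $\K_M\cap\ell$ as the union of the three pieces $\SX\cap\ell$, $\SY\cap\ell$, $\SZ\cap\ell$. Each piece is an interval, since each arm is a product of intervals and hence convex. It then suffices to show that any two nonempty pieces either overlap or abut, and that the union is connected along $\ell$. Because each arm is a coordinate box, $\ell$ meets an arm precisely when its two fixed coordinates lie in the corresponding factor intervals. In that case the intersection is the factor interval in the free coordinate. The whole argument is therefore a finite case check over $k\in\{x,y,z\}$.

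For $k=x$, with $(y,z)$ fixed, the pieces are as follows. $\SX$ contributes $(-\infty,0]$ if $(y,z)\in[0,1]^2$. $\SY$ contributes $[0,1]$ if $y\in[0,y_t]$ and $z\in[0,1]$. $\SZ$ contributes $[0,1]$ if $y\in[y_t,y_t+1]$ and $z\ge0$. Whenever $\SX$ is met, $y\le1\le y_t$ and $z\in[0,1]$, so $\SY$ is also met, and $(-\infty,0]\cup[0,1]$ is an interval. The $\SY$ and $\SZ$ contributions are the same interval $[0,1]$. Hence the union is always a single interval.

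For $k=y$, with $(x,z)$ fixed, the pieces are as follows. $\SX$ contributes $[0,1]$ if $x\le0$ and $z\in[0,1]$. $\SY$ contributes $[0,y_t]$ if $x\in[0,1]$ and $z\in[0,1]$. $\SZ$ contributes $[y_t,y_t+1]$ if $x\in[0,1]$ and $z\ge0$. If $\SX$ and $\SY$ are both met, then $x=0$ and the union is $[0,y_t]$. If $\SX$ and $\SZ$ are both met, then $x=0$ and $z\in[0,1]$, so $\SY$ is also met and the union is $[0,y_t+1]$. The pair $\SY,\SZ$ abuts at $y_t$.

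This pair $\SX,\SZ$ in the $k=y$ case is the only place where a gap could appear. It is also the step I expect to be the main obstacle, although a mild one: one must notice that the middle arm fills the gap between the two. The case $k=z$ is handled the same way. With $(x,y)$ fixed, $\SX$ and $\SY$ each contribute $[0,1]$, and $\SZ$ contributes $[0,\infty)$. All three pieces contain $0$, so their union is the interval $[0,\infty)$ or $[0,1]$. Since every case yields an interval, possibly empty, a point, or unbounded, the lemma follows.
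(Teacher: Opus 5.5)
Your proof is correct and takes essentially the same route as the paper's. Both do a finite case check over the three axis directions, computing each arm's contribution on the line, and the only potential gap (between $\SX$ and $\SZ$ on $\yhat$-parallel lines) is bridged by $\SY$, just as in the paper's computation $[0,y_t]\cup[y_t,y_t+1]=[0,y_t+1]$. One wording point: your stated reduction that ``any two nonempty pieces overlap or abut'' is not literally true for that $\SX,\SZ$ pair, but your argument handles the pair correctly via $\SY$, so nothing is missing.
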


\begin{proof}
Case check on the two frozen coordinates.  Lines parallel to $\xhat$: for
$(y,z)\in[0,1]^2$ the intersection is $(-\infty,0]\cup[0,1]=(-\infty,1]$
(the $\SY$ condition $y\le y_t$ holds automatically as $y_t>1$); for
$y\in(1,y_t]$, $z\in[0,1]$ it is $[0,1]$; for $y\in[y_t,y_t+1]$, $z\ge0$ it
is $[0,1]$; otherwise empty.  Lines parallel to $\yhat$: for $x<0$ it is
$[0,1]$ if $z\in[0,1]$, else empty; for $x\in[0,1]$, $z\in[0,1]$ it is
$[0,y_t]\cup[y_t,y_t+1]=[0,y_t+1]$; for $x\in[0,1]$, $z>1$ it is
$[y_t,y_t+1]$.  Lines parallel to $\zhat$: for $x<0$, $y\in[0,1]$ it is
$[0,1]$; for $x\in[0,1]$, $y\in[0,y_t)$ it is $[0,1]$; for $x\in[0,1]$,
$y\in[y_t,y_t+1]$ it is $[0,1]\cup[0,\infty)=[0,\infty)$.  All are
intervals.
\end{proof}

Consequently, a point whose positions at both endpoints of an axis-parallel
translation lie in $\K$ stays in $\K$ throughout that translation: the
pipeline's translation check is exact, not merely sampled.

\begin{lemma}[$8$-corner exactness]\label{lem:corners}
An axis-aligned box with all $8$ corners in $\K_M$ is contained in $\K_M$.
\end{lemma}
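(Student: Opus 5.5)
The plan is to reduce the statement to Lemma~\ref{lem:axisconv}, applied one coordinate at a time. Let $\mathcal B=[a_1,b_1]\times[a_2,b_2]\times[a_3,b_3]$ be the box, with all eight corners $(\epsilon_1,\epsilon_2,\epsilon_3)$, $\epsilon_i\in\{a_i,b_i\}$, in $\K_M$. I would fill the box in three stages, each stage moving along one coordinate direction and using that $\K_M$ meets every axis-parallel line in an interval.

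\emph{Stage 1 (edges along $\xhat$).} Fix $\epsilon_2\in\{a_2,b_2\}$ and $\epsilon_3\in\{a_3,b_3\}$. The points $(a_1,\epsilon_2,\epsilon_3)$ and $(b_1,\epsilon_2,\epsilon_3)$ lie in $\K_M$ and on a common line parallel to $\xhat$. By Lemma~\ref{lem:axisconv} the whole segment between them lies in $\K_M$. Hence for every $x\in[a_1,b_1]$ the four points $(x,\epsilon_2,\epsilon_3)$ lie in $\K_M$.

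\emph{Stage 2 (faces along $\yhat$).} Fix $x\in[a_1,b_1]$ and $\epsilon_3$. The points $(x,a_2,\epsilon_3)$ and $(x,b_2,\epsilon_3)$ are in $\K_M$ by Stage~1 and lie on a line parallel to $\yhat$. Axis-convexity again gives $(x,y,\epsilon_3)\in\K_M$ for every $y\in[a_2,b_2]$.

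\emph{Stage 3 (interior along $\zhat$).} Fix $(x,y)\in[a_1,b_1]\times[a_2,b_2]$. The points $(x,y,a_3)$ and $(x,y,b_3)$ are in $\K_M$ by Stage~2 and lie on a line parallel to $\zhat$. One last application of Lemma~\ref{lem:axisconv} places the whole box in $\K_M$.

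Degenerate boxes, where $a_i=b_i$, need no separate treatment, since the segments simply collapse to points. The one step that carries real content is the axis-convexity lemma itself, and that has already been established by the case check in Lemma~\ref{lem:axisconv}.

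The point to be careful about is that the argument uses only axis-convexity, not convexity of $\K_M$, which fails. This works because each interpolation is made along a coordinate line between two points already known to lie in $\K_M$, and the order $x,y,z$ is arbitrary. I expect no real obstacle beyond this bookkeeping.
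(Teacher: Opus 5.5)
Your proof is correct, but it takes a different route from the paper's. The paper proves the lemma by a direct case split on where the box sits:
\begin{itemize}
\item If $x_0<0$, the corners at $x_0$ must lie in $\SX$. This pins $y$ and $z$ into $[0,1]$, and then forces $x_1\le1$, so the box lies in $(-\infty,1]\times[0,1]\times[0,1]\subseteq\SX\cup\SY$.
\item If $x_0\ge0$ and $z_1\le1$, the box lies in $[0,1]\times[0,y_t+1]\times[0,1]\subseteq\SY\cup\SZ$.
\item If $z_1>1$, the four top corners force $y_0,y_1\in[y_t,y_t+1]$, so the box lies in $\SZ$.
\end{itemize}
You instead show that the statement follows formally from Lemma~\ref{lem:axisconv}. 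Interpolating along $\xhat$, then $\yhat$, then $\zhat$ shows that any set meeting every axis-parallel line in an interval contains every axis-aligned box whose corners it contains. Your route is shorter and reuses the case check already done. It also shows that the pipeline's rotation-step soundness needs nothing about $\K_M$ beyond axis-convexity, so it would carry over unchanged to any orthogonally convex corridor. The paper's route is self-contained: it does not lean on Lemma~\ref{lem:axisconv}, so the two soundness facts are checked independently. It also tells you more, namely which arm or pair of arms actually contains the box.
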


\begin{proof}
Let the box be $[x_0,x_1]\times[y_0,y_1]\times[z_0,z_1]$ with corners in
$\K$.  If $x_0<0$: the corners with $x$-coordinate $x_0$ lie in $\SX$ (the
only piece reaching $x<0$), so $y_1\le1$ and $z_1\le1$; and any corner at
$x_1$ lies in $\K$ with $y,z\in[0,1]$, forcing $x_1\le1$; then the box lies
in $(-\infty,1]\times[0,1]\times[0,1]\subseteq\SX\cup\SY$.  If $x_0\ge0$:
all corners lie in $\SY\cup\SZ$, so $x_1\le1$.  If moreover $z_1\le1$: all
corner $y$-values lie in $[0,y_t+1]$, so every point of the box has
$x\in[0,1]$, $z\in[0,1]$, $y\in[0,y_t+1]$, hence lies in $\SY$ (if
$y\le y_t$) or in $\SZ$ (if $y\ge y_t$).  If instead $z_1>1$: the four
corners with $z$-coordinate $z_1$ force $y_0,y_1\in[y_t,y_t+1]$, so the box
lies in $[0,1]\times[y_t,y_t+1]\times[0,\infty)=\SZ$.
\end{proof}

Consequently the rotation-step check is sound: each point's arc lies in its
bounding box, whose corners are verified, so the box --- and with it the arc
--- lies in $\K$.  Together with Lemma~\ref{lem:axisconv} this proves:
\emph{every grid point that the pipeline reports as surviving has its entire
continuous trajectory inside $\K$}, so the surviving point set, moved by the
interpolated motion, genuinely stays in the corridor; this is the precise
point-level sense in which the numbers of Table~\ref{tab:ladder} are
certified, and the sense in which they are lower bounds for
$\Vstar(\K_4)$ once the points-to-cubes step (Remark~\ref{rem:caveat}) is
granted or repaired.  The traversal endpoints of the pipeline's fixed start
and end poses lie in $\SX$ and $\SZ\cap\{z\ge1\}$ respectively, so the
motions certify \emph{weak} (indeed strong) traversal in the sense of
Definition~\ref{def:traversal}.

\section{Reproducibility artifacts for Theorem~\ref{thm:C}}
\label{app:repro}

All artifacts below are publicly archived at
\begin{center}
\url{https://github.com/Lomnus-ai/sofa3d-upper-bound}
\end{center}
(paths relative to the repository root).

\begin{itemize}
\item \emph{Code}: \texttt{scripts/wu3\_outer/} --- \texttt{geometry.py}
(parameterization, frames, drift constants), \texttt{evaluator.py}
(certified-outer per-box bound: interval line clipping, dilated-kernel
correlation, feasibility masks, chord fallbacks), \texttt{bnb.py}
(branch-and-bound with atomic checkpoints), \texttt{gates.py} (the G1--G4 +
Monte-Carlo battery of Remark~\ref{rem:gates}), and the run drivers.  The
scheme itself --- the parameterization, the per-inequality soundness audit of
every over-counting direction, and the design decisions behind it --- is
presented in Section~\ref{sec:certC}; docstring references to
``derivation notes'' point to the author's working notes.
\item \emph{Gate battery output}: \texttt{reports/wu3\_gates.json}
(all gates pass; re-run after every evaluator change).
\item \emph{Results of record}: \texttt{reports/wu3\_stageA\_summary.json}
(the assembled bound and all $8$ runs: $186{,}207$ splits, $188{,}255$
leaves in total), with per-run files
\texttt{wu3\_stageA\_u$\ast$\_g0.02\_final.json},
$\ast\in\{2.2,2.4,2.6\}$, and their \texttt{\_polished\_g0.01} companions
in \texttt{reports/}.
\item \emph{Resumable search frontiers}:
\texttt{data/wu3/bnb\_u$\ast$\_g0.02\_final.npz},
$\ast\in\{2.2,2.4,2.6\}$ --- atomically
written, configuration-keyed, with bit-exact resume verified (gate G4).
Resuming them can only tighten Table~\ref{tab:landscape}: every intermediate
frontier maximum is itself a certified bound.
\item \emph{Forward guidance (uncertified)}: at $u=2.2$ the frontier is
drift-limited rather than resolution-limited ($15{,}245$ of $49{,}513$
leaves within $0.1$ of the maximum), and interior (non-certified) estimates
of the objective near the maximizing region sit around $2.8$; further search
on the same checkpoints is expected to certify $\approx3.1$--$3.3$.  Only
the certified $3.4083$ is quoted in this note.
\end{itemize}

\section*{Tool use}

The certified computation of Section~\ref{sec:certC} was carried out in Python
(NumPy/SciPy) on CPU, using the branch-and-bound driver, outer evaluator and
validation gates archived in the repository of Appendix~\ref{app:repro}; the
constants stated throughout are re-checked by the automated witness scripts
kept with the note's source.  This note was typeset in \LaTeX.

The research reported here is original work, conceived and driven by the
author.  Claude (Anthropic) was used as a writing and coding assistant: it
helped polish the exposition of this note, and assisted with parts of the
code.  All final results reported here were verified by the author, who is
responsible for the content of this note.

\end{document}